\documentclass[sigconf]{acmart}

\AtBeginDocument{%
  }

\copyrightyear{2025}
\acmYear{2025}
\setcopyright{rightsretained}
\acmConference[MOBIHOC '25]{The Twenty-sixth International Symposium on Theory, Algorithmic Foundations, and Protocol Design for Mobile Networks and Mobile Computing}{October 27--30, 2025}{Houston, USA}
\acmBooktitle{The Twenty-sixth International Symposium on Theory, Algorithmic Foundations, and Protocol Design for Mobile Networks and Mobile Computing (MOBIHOC '25), October 27--30, 2025, Houston}

\usepackage[utf8]{inputenc} 
\usepackage[T1]{fontenc}    
\usepackage{hyperref}       
\usepackage{url}            
\usepackage{booktabs}       
\usepackage{amsfonts}       
\usepackage{nicefrac}       
\usepackage{microtype}      
\usepackage{xcolor}
\usepackage{verbatim}
\usepackage{graphicx}
\usepackage{textcomp}
\usepackage{cases}
\usepackage{amsmath,amsfonts}

\usepackage{bm}
\usepackage{fancyhdr,graphicx}
\usepackage{subfigure}
\usepackage[utf8]{inputenc}
\usepackage{soul}
\usepackage[noend]{algorithm, algorithmic}
\usepackage{multirow}
\usepackage{fontenc}
\usepackage{amsthm}
\usepackage{enumitem}
\newtheorem{theorem}{Theorem}
\newtheorem{lemma}{Lemma}

\usepackage{subcaption}
\newtheoremstyle{example_style}  
  {3pt}  
  {3pt}    
  {\normalfont}  
  {}       
  {\bfseries} 
  {.}      
  { }      
  {\thmname{#1}\thmnumber{ #2}\thmnote{ (#3)}}  

\theoremstyle{example_style}
\newtheorem{Example}{Example} 

\begin{document}

\title{TSDM: A Scheduling Policy for Joint Throughput-AoI Optimization in Multichannel Wireless Networks}
\author{Lin Wang}
\email{linwang0420@tamu.edu}
\affiliation{%
  \institution{Department of ECE, Texas A\& M University}
  \city{College Station}
  \state{Texas}
  \country{USA}
}
\author{I-Hong Hou}
\email{ihou@tamu.edu}
\affiliation{%
  \institution{Department of ECE, Texas A\& M University}
  \city{College Station}
  \state{Texas}
  \country{USA}
}

\begin{abstract}

Optimizing for both low Age of Information (AoI) and high throughput is critical for remote sensing applications that rely on multichannel wireless networks. However, jointly optimizing these two metrics is an analytically challenging problem, particularly in systems with heterogeneous and unreliable channels. To address this challenge, we propose TSDM, a Two-Stage Deficit Matching scheduling framework. TSDM is based on a second-order approach that characterizes the performance of each data flow by its mean and temporal variance. In the first stage, TSDM translates the high-level utility maximization objective into a concrete set of target mean and temporal variance statistics for transmissions over each node–channel pair. In the second stage, a low-complexity Weighted Matching Deficit (WMD) rule performs real-time channel assignment. We theoretically prove that TSDM achieves the desired mean and temporal variance for each flow. Furthermore, we conduct extensive simulations on two open joint throughput-AoI optimization problems. In both cases, TSDM significantly outperforms existing scheduling policies.

\end{abstract}

\begin{CCSXML}
<ccs2012>
   <concept>
       <concept_id>10003033.10003068</concept_id>
       <concept_desc>Networks~Network algorithms</concept_desc>
       <concept_significance>500</concept_significance>
       </concept>
   <concept>
       <concept_id>10003752.10010070.10010071.10010079</concept_id>
       <concept_desc>Theory of computation~Online learning theory</concept_desc>
       <concept_significance>500</concept_significance>
       </concept>
 </ccs2012>
\end{CCSXML}

\ccsdesc[500]{Networks~Network performance modeling; Network performance analysis; Packet scheduling}
\keywords{Wireless networks, age of information, throughput, scheduling}
\maketitle

\section{INTRODUCTION} \label{section:intro}

A growing number of critical applications, from drone surveillance to large-scale environmental monitoring, depend on the timely collection of status updates from spatially distributed sensors. These systems typically operate over multiple wireless channels that exhibit heterogeneous and unreliable conditions due to factors like frequency differences and signal propagation effects\cite{goldsmith2005wireless}. To ensure high performance, a scheduler must intelligently allocate these limited channel resources among a larger number of sensor nodes. Moreover, success in these applications hinges on balancing two competing objectives: maintaining data freshness with low Age of Information (AoI) and ensuring sufficient data with high throughput~\cite{guo2021scheduling}.

While scheduling for AoI minimization, including in multi-channel settings, has been extensively studied, the joint optimization of throughput and AoI remains underexplored. To address this challenge, this paper investigates a new network utility maximization (NUM) problem in multi-channel systems where the utility function of each node depends on both its throughput and AoI performance. The goal is to design a low-complexity scheduling policy that dynamically allocates channel resources based on real-time system states, effectively balancing throughput and AoI performance.

Solving this NUM problem of joint throughput-AoI optimization directly is challenging because it requires explicitly addressing three challenges: the competition for resources among multiple nodes, the trade-off between throughput and AoI, and the stochastic nature of packet deliveries. To overcome these challenges, we propose a novel scheduling framework called Two-Stage Deficit Matching (TSDM). TSDM leverages a second-order analytical approach that captures not only the mean but also the temporal variance of the packet delivery process. This allows us to characterize the achievable throughput-AoI performance in multi-channel systems with heterogeneous unreliable channels.

Furthermore, TSDM decomposes the challenging joint optimization problem into two stages. In the first stage, the high-level utility maximization objective is mapped to a concrete set of target mean and temporal variance statistics for each node-channel pair. In the second stage, a low-complexity Weighted Matching Deficit (WMD) scheduling rule dynamically allocates channels to nodes via bipartite matching, steering the system toward the target statistics derived in the first stage.


We provide both theoretical guarantees and simulation studies. Theoretically, we prove that WMD guarantees the desired mean and temporal variance for each node-channel pair, ensuring system stability and target performance achievement. We further establish the optimality of TSDM for the NUM problem in node-homogeneous systems. Simulation results across multiple scenarios, including utility maximization under throughput constraints and weighted proportional fairness, demonstrate that TSDM achieves near-optimal performance while consistently outperforming existing baselines.


The main contributions of this paper are summarized as follows.
\begin{itemize}

\item We study the joint throughput-AoI NUM problem in multi-channel wireless networks with heterogeneous unreliable channels. 

\item We propose a two-stage scheduling framework that leverages second-order analysis to derive the desired operating point and employs a low-complexity matching-based rule to achieve it. We further prove that the desired point is indeed the optimal point in node-homogeneous systems.
\item Extensive simulations across diverse scenarios show that TSDM closely approaches theoretical performance and consistently outperforms state-of-the-art scheduling baselines.

\end{itemize}

The remainder of this paper is organized as follows. Section~\ref{section:relatedwork} reviews recent studies on wireless scheduling.
Section~\ref{section:model} introduces the system model and formulates the joint throughput-AoI optimization as the NUM problem in multi-channel systems.
Section~\ref{section:secorderappro} presents the second-order analytical framework in multi-channel systems.
Section~\ref{section:policy} introduces the proposed TSDM scheduling policy.
Section~\ref{section:achievability} characterizes the achievable throughput-AoI region under the WMD scheduling rule.
Section~\ref{section:specialcase} establishes the optimality of TSDM under node-homogeneous systems.
Section~\ref{section:simulation} presents simulation results and performance evaluation across different settings.
Finally, Section~\ref{section: conclusion} concludes the paper.
\section{RELATED WORK}\label{section:relatedwork}

Extensive studies have investigated AoI minimization in single-source systems~\cite{firstaoi,minaoiharq,sun2017update}, focusing on optimal scheduling policies under system constraints. However, in practical wireless networks, multiple distributed source nodes transmit updates simultaneously.

Hsu et al.~\cite{hsu2018age,hsu2019scheduling} formulate AoI minimization as an MDP or restless bandit problem and design Whittle index policies that achieve near-optimal performance, while Tripathi et al. ~\cite{tripathi2022optimizing}and Ramakanth et al. ~\cite{ramakanth2024monitoring} study correlated sources and exploit correlation to improve freshness. However, these works do not explicitly consider channel unreliability due to fading.

To account for such impairments, some works extend these models to unreliable channels with stochastic transmission errors~\cite{stocasticarrival,modiano,lin,kadota2018scheduling,sharedonoff}, while some works model the channel 
as a binary ON/OFF process~\cite{guo2024aoi,guo2022theory}. In particular, Kadota et al.~\cite{modiano} propose low-complexity scheduling policies to minimize AoI under throughput constraints, and Wang and Hou~\cite{lin} characterize the fundamental trade-off between throughput and AoI in unreliable wireless networks. Wang et al. ~\cite{csma2} and Tripathi et al.~\cite{freshcsma} further design distributed CSMA-based protocols to approximate centralized scheduling decisions. However, these works primarily focus on single-channel systems, which do not reflect practical multi-channel wireless networks.

For multi-channel systems, prior works~\cite{multichannel1,multichannel2} often model each channel as a binary ON/OFF process and assume that instantaneous channel states are perfectly known before scheduling. In practice, however, channel states are typically unknown at decision time, and scheduling must rely only on statistical channel knowledge. Pan et al.~\cite{pan2022age} study a hybrid two-channel system, but their model is restricted to a fixed setting with one reliable and one unreliable channel, and does not generalize to arbitrary multi-channel systems. Chi et al.~\cite{chi2023aoi} and Zou et al.~\cite{partialindex} both 
study AoI minimization in heterogeneous multi-channel systems, proposing Whittle-index-based policies via restless multi-armed bandit 
formulations. However, neither work incorporates throughput objectives, and the iterative index computation in both approaches may incur 
significant overhead in large-scale systems.

\section{SYSTEM MODEL}\label{section:model}


We consider a time-slotted wireless system where a Base Station (BS) schedules status updates from $N$ sensor nodes over $M$ orthogonal uplink channels, with $M < N$, as shown in Fig.~\ref{fig:systemmodel}. The set of nodes is denoted by $\mathcal{I} = \{1, 2, \dots, N\}$, and the set of available channels by $\mathcal{J} = \{1, 2, \dots, M\}$. Time is slotted and indexed by $t \in \{ 1, 2, \dots\}$. We adopt the standard collision channel model\cite{sun2019closed}. In each time slot, the BS can schedule one node to transmit its status update on each channel, i.e., up to $M$ nodes can transmit simultaneously. Moreover, each node can be scheduled at most once per slot.

The schedule decision in each time slot can be formulated as a matching problem on a bipartite graph $G(t)=(\mathcal I,\mathcal J,\mathcal E)$, where the left vertex set $\mathcal I$ represents the nodes and the right vertex set $\mathcal J$ represents the channels. Each edge $(i,j)\in\mathcal E$ corresponds to a node–channel pair.
A feasible scheduling decision corresponds to a matching 
in $G(t)$, i.e., a set of $M$ edges selected from $\mathcal{E}$ such that each node appears in at most one edge and each channel 
appears in exactly one edge.
We denote the set of all feasible matchings by $\mathcal{K} = \{k_1, k_2,\dots, k_K\}$, where $K$ is the total number of feasible matchings.

The channel is potentially unreliable. Each scheduled transmission from node $i \in \mathcal{I}$ on channel $j \in \mathcal{J}$ succeeds independently with probability $p_{ij} \in (0,1]$, capturing heterogeneous channel conditions due to channel fading. Although $p_{ij}$ remains constant over time, it may vary across different node–channel pairs. We define $Z_i(t)$ as an indicator variable, where $Z_i(t)=1$ denotes that node $i$ successfully delivers an update to the BS at time $t$, and $Z_i(t)=0$ otherwise. Similarly, let $Z_{ij}(t)=1$ denote a successful transmission from node $i$ on channel $j$ at time $t$, and $Z_{ij}(t)=0$ otherwise. Clearly, $Z_i(t)=\sum_{j=1}^MZ_{ij}(t)$.
We adopt the generate-at-will model~\cite{kadota2018scheduling}, where each node scheduled by the BS at time $t$ will generate and send an update containing its status at time $t$. 

To evaluate the performance of each node $i$, we adopt two key metrics: throughput and AoI, which quantify the delivery rate and the timeliness of status updates, respectively.
The throughput of node $i$ is defined as the long-term average number of successful deliveries, given by: $\liminf\limits_{T \to \infty} \frac{1}{T} \sum_{t=1}^T Z_i(t)$. In our model, the AoI of node $i$ at the BS is denoted by $a_i(t)$, which is defined as the time elapsed since the last successfully delivered update from node $i$. The AoI of node $i$ evolves according to the following update rule:
\begin{equation}
    \label{eq:aoi_update}
    a_i(t+1) =
    \begin{cases} 
        1, & \text{if } Z_i(t) = 1, \\
        a_i(t) + 1, & \text{otherwise}.
    \end{cases}
\end{equation}
The long-term average AoI of node \( i \) is then defined as\\ 
 $\limsup\limits_{T \to \infty} \frac{\sum_{t=1}^{T} a_i(t)}{T}.$
\begin{figure}[h]
    \vspace{-10pt}
    \begin{center}
        \includegraphics[width=0.75\linewidth]{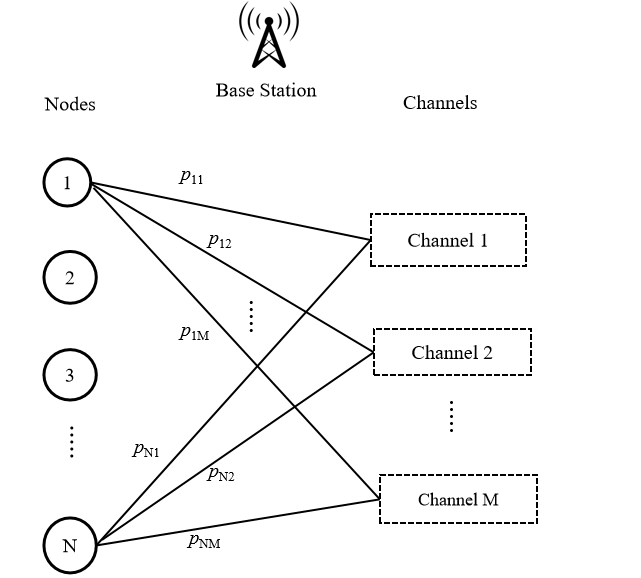}
    \end{center}
    \caption{System model}
    \label{fig:systemmodel}
\end{figure}

The objective of this paper is to develop a scheduling policy that effectively balances throughput and AoI in multi-node wireless networks with multiple heterogeneous and unreliable channels. For each node $i$, let $m_i$ denote its long-term average throughput 
and $h_i$ denote its average AoI. We associate a utility function $U_i(m_i, h_i)$ to node $i$, where $U_i$ is increasing in $m_i$ and decreasing in $h_i$.
Let $\mathcal{R}$ denote the achievable throughput-AoI region, i.e., the set of all performance vectors $\{(m_i, h_i)\}_{i=1}^N$ that can be realized by some scheduling 
policy. The overall objective is to maximize the aggregate utility over the achievable region:
\begin{align}
\max \quad & \sum_{i=1}^N U_i(m_i, h_i) \\
\text{s.t.} \quad & \{(m_i, h_i)\}_{i=1}^N \in \mathcal{R}.
\end{align}
To illustrate the versatility of this framework, we present two representative instances of $U_i$.
\begin{Example}[Utility maximization with soft throughput constraints]
\label{minaoiwithtp}
Kadota et al.~\cite{modiano} first studied the problem of minimizing AoI under hard throughput constraints in a single-channel system. However, when throughput requirements are overly strict or infeasible, enforcing hard constraints may lead to undesirable system behavior. We consider an extension to multi-channel systems where throughput constraints are relaxed and violations are penalized. Let $v_i = (q_i - m_i)^+$ denote the throughput violation of node $i$, 
where $(x)^+ = \max\{x,0\}$, and let $C(\cdot)$ be a nonnegative increasing cost function. The utility function is: $U_i(m_i, h_i) = -(C\big((q_i-m_i)^+\big) + h_i),$
and the optimization problem becomes:
\begin{align}
\max \quad 
& -\sum_{i=1}^N \Big( C\big((q_i - m_i)^+\big) + h_i \Big) \\
\text{s.t.} \quad 
& \{(m_i,h_i)\}_{i=1}^N \in \mathcal{R}.
\end{align}
\end{Example}

\begin{Example}[Weighted proportional fairness in throughput and AoI]
\label{proportional_fairness}
Proportional fairness is a widely adopted objective in network optimization. It is well known that maximizing $\sum_i \log m_i$ achieves proportional fairness in throughput. In the context of status update systems, it is desirable to balance both information freshness and update delivery across nodes. The utility function is: $U_i(m_i, h_i) = \alpha_i \log m_i - \beta_i \log h_i,$
where $\alpha_i, \beta_i > 0$ are weights that reflect the relative 
importance of throughput and AoI for node $i$. The optimization problem becomes:
\begin{align}
\max \quad
& \sum_{i=1}^N \left( \alpha_i\log m_i -\beta_i\log h_i \right) \\
\text{s.t.} \quad
& \{(m_i,h_i)\}_{i=1}^N \in \mathcal{R}.
\end{align}
\end{Example}

\section{SECOND-ORDER APPROXIMATION AND
CHANNEL-WIDE TEMPORAL VARIANCE}\label{section:secorderappro}
In this section, we review recent results that apply second-order techniques to AoI optimization and performance bounds in single-channel systems. We then employ these results to establish several fundamental properties for multichannel systems.

Guo et al.~\cite{guo2022theory} have proposed a framework of second-order analysis to study AoI. In second-order analysis, the packet delivery process $\{Z_i(t)\}$ is characterized by its \emph{mean} $\mu_i$ and \emph{temporal variance} $\sigma^2_i$, which are defined as:
\begin{equation*}
\label{eq:mean}
\mu_i := \lim_{T \to \infty} \frac{\sum_{t=1}^T Z_i(t)}{T}, 
\quad \text{and} \quad
\sigma_i^2 := \mathbb{E}\!\left[\left(\lim_{T \to \infty}
\frac{\sum_{t=1}^T Z_i(t) - T\mu_i}{\sqrt{T}}\right)^2\right].
\end{equation*}

Clearly, $\mu_i$ is also the throughput of node $i$. Moreover, Guo et al.~\cite{guo2022theory} have shown that AoI can be well-approximated as a function of $\mu_i$ and $\sigma_i^2$:
\begin{equation}
    \label{eq:aoi_appro}
  \limsup\limits_{T \to \infty} \frac{\sum_{t=1}^{T} a_i(t)}{T}\approx\frac{1}{2} \left(\frac{\sigma_i^2}{\mu_i^2} + \frac{1}{\mu_i} \right) + \frac{1}{2}.
\end{equation}



For all our theoretical analysis, we assume the approximation in Eq.~\eqref{eq:aoi_appro} is exact, and we will evaluate the accuracy of the approximation in Section~\ref{section:simulation}.

We now discuss how to extend this framework to multi-channel systems. We define the mean and temporal variance of $\{Z_{ij}(t)\}$ as:
\begin{equation*}
\label{eq:mean_var_multi}
\mu_{ij} := \lim_{T \to \infty} \frac{\sum_{t=1}^T Z_{ij}(t)}{T},
\quad
\sigma_{ij}^2 := 
\lim_{T \to \infty}
\mathbb{E}\!\left[
\left(
\frac{\sum_{t=1}^T Z_{ij}(t) - T\mu_{ij}}{\sqrt{T}}
\right)^2
\right],
\end{equation*}
for each node $i$ on channel $j$. 


Since $Z_i(t)=\sum_{j=1}^MZ_{ij}(t)$, we clearly have $\sum_{j=1}^{M} \mu_{ij} = \mu_i$, for all $i$ under any scheduling policy.

However,  the relationship between $\sigma_i^2$ and 
$\{\sigma_{ij}^2\}_{j=1}^M$ is more complex: $\sigma_i^2$ cannot be expressed solely as a function of $\{\sigma_{ij}^2\}_{j=1}^M$. 
This is because the computation of $\sigma_i^2$ requires knowledge of the cross-channel covariance terms 
$\mathrm{Cov}(Z_{ij}(t), Z_{ij'}(t))$ for any two distinct channels $j \neq j'$. 
Handling this coupling across channels is one of the main technical contributions of this work, which we will discuss later.

To proceed, we first characterize the fundamental limitations on $\mu_{ij}$ and $\sigma^2_{ij}$ imposed by the unreliable nature of wireless transmission.

Recall that a transmission of node $i$ on channel $j$ succeeds with
probability $p_{ij}$.
Since each node $i$ can be scheduled on at most one channel, and each
channel $j$ is assigned to exactly one node in each time slot, the long-term throughputs must satisfy:
\begin{equation}\label{eq:neccon}
   0 \leq \sum_{j=1}^{M} \frac{\mu_{ij}}{p_{ij}} \leq 1, \quad \forall i,
\qquad
\sum_{i=1}^{N} \frac{\mu_{ij}}{p_{ij}} = 1, \quad \forall j. 
\end{equation}

Besides, Wang and Hou~\cite{lin} established a bound on the temporal variance in single-channel systems. By applying their result to each channel individually, we obtain the following per-channel variance $\sigma_j^2$ constraints :
\begin{equation}\label{eq:channel_variance_constraint}
    \sum_{i=1}^{N} \sqrt{\frac{\sigma_{ij}^2}{p_{ij}^2}}  \geq \sqrt{\sigma_j^2}=
\sqrt{\sum_{i=1}^{N} \frac{\mu_{ij}}{p_{ij}} \left( \frac{1}{p_{ij}} - 1 \right)},  \quad \forall j.
\end{equation}

\section{A TWO-STAGE SCHEDULING POLICY}\label{section:policy}
In this section, we propose a scheduling policy called \emph{Two-Stage Deficit Matching} (TSDM).
The proposed policy consists of two stages: In the first stage, for the throughput-AoI requirement vector $\{(m_i,h_i)\}_{i=1}^N$, we map them to a corresponding set of per node-channel pair parameters $\{(\mu_{ij},\sigma_{ij}^2)\}$. In the second stage, we employ a simple matching rule to deliver the desirable $\{(\mu_{ij},\sigma_{ij}^2)\}$.

We first discuss the first stage. Given $\{(m_i,h_i)\}_{i=1}^N$, we will find $\{(\mu_{ij},\sigma_{ij}^2)\}$ that satisfy the following conditions:
\begin{align}
&\mu_i \geq \ m_i,\quad\frac{1}{2} \left( \frac{\sigma_i^2}{\mu_i^2} + \frac{1}{\mu_i} \right) + \frac{1}{2} \leq h_i,\forall i,\label{eq:aoi2}\\
&\sum_{j=1}^M \mu_{ij}=\mu_i, \forall i,\label{eq:sumofmean2}\\
&0<\sum_{j=1}^M\frac{\mu_{ij}}{p_{ij}}<1, \forall i,\quad \sum_{i=1}^{N}\frac{\mu_{ij}}{p_{ij}}=1, \forall j,\label{eq:somean2}\\
&\sum_{i=1}^{N} \sqrt{\frac{\sigma_{ij}^2}{p_{ij}^2}} = \sqrt{\sum_{i=1}^{N} \frac{\mu_{ij}}{p_{ij}} \left( \frac{1}{p_{ij}} - 1 \right)},  \forall j ,\label{eq:sovariance2}\\
&\sigma_i^2=\sum_{j=1}^M\sigma^2_{ij}, \forall i,\label{eq:sumofvariance}\\
&\mu_{ij} > 0,\sigma^2_{ij}>0, \forall i,j.\label{eq:geq02}
\end{align}

Condition Eq.~\eqref{eq:aoi2} enforces the throughput and AoI requirements, respectively. 
Conditions Eq.~\eqref{eq:sumofmean2}--Eq.~\eqref{eq:sovariance2} are based on the constraints introduced in the previous section. 
Eq.~\eqref{eq:sovariance2} is taken as the equality form of Eq.~\eqref{eq:channel_variance_constraint}, corresponding to the tight boundary case of the per-channel variance constraint. 
To handle the cross-channel coupling in $\sigma^2_i$, we introduce Eq.~\eqref{eq:sumofvariance}, which decomposes the total variance $\sigma_i^2$ of node $i$ into a sum of per-channel variance components $\sigma_{ij}^2$. This decomposition decouples the cross-channel covariance terms and enables tractable optimization, as we will see in Sec ~\ref{section:specialcase}, this additional condition still preserves optimality under a homogeneous system. Finally, Eq.~\eqref{eq:geq02} is a non-degeneracy condition ensuring that all per-channel throughput and variance targets are strictly positive.







 Combined with the example application problems in Sec~\ref{section:model}, the first stage effectively solves an optimization problem to find the optimal$\{(\mu_{ij},\sigma_{ij}^2)\}$ within the region defined by Eq.~\eqref{eq:aoi2}--Eq.~\eqref{eq:geq02}.

In the second stage, based on the obtained solution from the optimization problem, we apply a \emph{weighted matching deficit} (WMD) rule to perform real-time channel assignment.
This WMD rule drives the system toward the desired operating point by achieving the target throughput and variance $\{(\mu_{ij},\sigma_{ij}^2)\}$ obtained from the first stage in the long run.

To introduce the WMD rule, we first define the \emph{deficit} of a node \( i \) on channel $j$ at time \( t \)  as follows:
\begin{equation}\label{eq:d_i}
d_{ij}(t) := \frac{t \mu_{ij} - \sum_{\tau=1}^{t} Z_{ij}(\tau)}{\sqrt{\sigma_{ij}^2}}.
\end{equation}

We note that $t\mu_{ij}$ represents the number of deliveries required for node $i$ on channel $j$ by time $t$ to achieve a mean throughput of $\mu_{ij}$, while $\sum_{\tau=1}^{t} Z_{ij}(\tau)$ denotes the actual number of deliveries completed up to time $t$.
Hence, $t\mu_{ij}-\sum_{\tau=1}^{t} Z_{ij}(\tau)$ represents how much the actual deliveries for node $i$ deviates from the target mean $\mu_{ij}$ on channel $j$. We further normalize this deviation by the target variance $\sqrt{\sigma_{ij}^2}$.

We then define the channel-wide normalized deficit on channel $j$ as:
\begin{equation}\label{eq:D_j}
D_j(t) :=
\frac{\sum_{i=1}^{N} \sqrt{\frac{\sigma_{ij}^2}{p_{ij}^2}}\, d_{ij}(t)}
{\sum_{i=1}^{N}\sqrt{\frac{\sigma_{ij}^2}{p_{ij}^2}}}.
\end{equation}
 
The corresponding weight for each node--channel pair is given by: $W_{ij}(t) :=  \sqrt{{\frac{\sigma_{ij}^2}{p_{ij}^2}}}(d_{ij}(t) - D_j(t)).$
For any matching $k$, its matching weight is defined as: $Y_k(t):=\sum\limits_{(i,j)\in k}W_{ij}(t).$
The WMD rule selects the max-weight matching at time $t$: $k^*(t) = \arg\max_{k\in\mathcal K} Y_k(t),$ and schedules transmissions accordingly.

\section{ACHIEVABLE REGION UNDER THE WMD RULE}\label{section:achievability}

In this section, we show that the WMD policy can achieve every point in the region described by Eq.~\eqref{eq:aoi2}--Eq.~\eqref{eq:geq02}.

To analyze the performance of the TSDM policy, we first establish a structural property of the node-channel pair weight process $\{W_{ij}(t)\}$. 
Specifically, we show that the weight process is zero-sum across nodes for each channel.

\begin{lemma}\label{lemma:sumofweight}
$ \sum_{i=1}^NW_{ij}(t)=0,$  for all $j$.
\end{lemma}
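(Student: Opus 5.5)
The claim follows directly from the definitions of $W_{ij}(t)$ and $D_j(t)$; no stochastic argument is needed. I would fix a channel $j$ and a time $t$ and abbreviate the per-pair weights as $c_{ij} := \sqrt{\sigma_{ij}^2/p_{ij}^2}$. By the non-degeneracy condition Eq.~\eqref{eq:geq02}, each $\sigma_{ij}^2>0$, and since $p_{ij}\in(0,1]$, every $c_{ij}$ is strictly positive. Hence $S_j := \sum_{i=1}^N c_{ij} > 0$. This positivity is the only point needing care, because it guarantees that $D_j(t)$ in Eq.~\eqref{eq:D_j} is well defined.

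With this notation, Eq.~\eqref{eq:D_j} reads $D_j(t) = \frac{1}{S_j}\sum_{i=1}^N c_{ij}\, d_{ij}(t)$. Thus $D_j(t)$ is the $c_{\cdot j}$-weighted average of the deficits on channel $j$. The weight is $W_{ij}(t) = c_{ij}\bigl(d_{ij}(t) - D_j(t)\bigr)$, so summing over $i$ and using linearity gives
\begin{equation*}
\sum_{i=1}^N W_{ij}(t) = \sum_{i=1}^N c_{ij}\, d_{ij}(t) - D_j(t)\sum_{i=1}^N c_{ij} = \sum_{i=1}^N c_{ij}\, d_{ij}(t) - S_j D_j(t).
\end{equation*}
Substituting the expression for $D_j(t)$, the last term equals $\sum_{i} c_{ij} d_{ij}(t)$, so the sum vanishes.

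The identity holds deterministically, for every sample path and every $t$, because it uses nothing about the process $\{Z_{ij}(t)\}$ or the scheduling decisions. I do not expect a real obstacle. The only step worth stating explicitly is the positivity of $S_j$, which relies on Eq.~\eqref{eq:geq02}; if some $\sigma_{ij}^2$ were allowed to be zero, that pair would simply contribute $c_{ij}=0$ to both terms, and the argument would go through as long as at least one pair on channel $j$ has $\sigma_{ij}^2>0$.
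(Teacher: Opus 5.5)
Your proof is correct and is essentially the paper's argument: expand $\sum_i W_{ij}(t)$, substitute the definition of $D_j(t)$, and the two terms cancel identically. Your explicit check that $\sum_i \sqrt{\sigma_{ij}^2/p_{ij}^2}>0$ via Eq.~\eqref{eq:geq02} is something the paper leaves implicit. Your closing aside is off, though: if $\sigma_{ij}^2=0$, then $d_{ij}(t)$ itself is undefined, so that pair does not simply contribute $0$.
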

\begin{proof} 
For each channel $j$:
\begin{align}
    \sum_{i=1}^NW_{ij}(t) &= \sum_{i=1}^N\sqrt{{\frac{\sigma_{ij}^2}{p_{ij}^2}}}(d_{ij}(t)-D_j(t)) \nonumber\\
    &= \sum_{i=1}^N\sqrt{{\frac{\sigma_{ij}^2}{p_{ij}^2}}}d_{ij}(t)-\sum_{i=1}^N\sqrt{{\frac{\sigma_{ij}^2}{p_{ij}^2}}}\frac{\sum_{i=1}^{N} \sqrt{{\frac{\sigma_{ij}^2}{p_{ij}^2}}} d_{ij}(t)}{\sum_{i=1}^{N} \sqrt{{\frac{\sigma_{ij}^2}{p_{ij}^2}}}} = 0 \nonumber\label{weightedw}
\end{align}  
\end{proof}
Next, we establish a relationship between the largest node-channel pair weight and the maximum matching weight.

\begin{lemma}\label{lemma:yandw}
At any time slot $t$, the maximum matching weight satisfies:
\[
Y_{k^*}(t) \geq \frac{N-M}{N-1} \max_{(i,j)\in\mathcal{E}} W_{ij}(t),
\]
where $k^* = \arg\max_{k \in \mathcal{K}} Y_k(t)$ is the max-weight matching.
\end{lemma}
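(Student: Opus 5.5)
The plan is to prove the bound by an averaging argument over a restricted family of matchings, with Lemma~\ref{lemma:sumofweight} supplying the needed column sums. Fix a time slot $t$ and let $(i^*,j^*)$ attain $W^* := \max_{(i,j)\in\mathcal E} W_{ij}(t)$. Let $\mathcal K^* \subseteq \mathcal K$ be the set of feasible matchings that contain the edge $(i^*,j^*)$. Each such matching assigns the remaining $M-1$ channels $\mathcal J\setminus\{j^*\}$ injectively to nodes in $\mathcal I\setminus\{i^*\}$. Since $M<N$, the set $\mathcal K^*$ is nonempty. Because $Y_{k^*}(t) = \max_{k\in\mathcal K} Y_k(t) \ge \max_{k\in\mathcal K^*} Y_k(t)$, it suffices to show that the \emph{average} of $Y_k(t)$ over $k$ drawn uniformly from $\mathcal K^*$ is at least $\frac{N-M}{N-1}W^*$.

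Draw $k$ uniformly from $\mathcal K^*$, that is, a uniformly random injection from $\mathcal J\setminus\{j^*\}$ into $\mathcal I\setminus\{i^*\}$. By symmetry, for each fixed channel $j\neq j^*$ the node assigned to $j$ is uniformly distributed over the $N-1$ nodes of $\mathcal I\setminus\{i^*\}$. By linearity of expectation,
\begin{align*}
\mathbb E[Y_k(t)] &= W^* + \sum_{j\neq j^*} \frac{1}{N-1}\sum_{i\neq i^*} W_{ij}(t)\\
&= W^* - \frac{1}{N-1}\sum_{j\neq j^*} W_{i^* j}(t),
\end{align*}
where the second equality uses Lemma~\ref{lemma:sumofweight}, namely $\sum_{i=1}^N W_{ij}(t)=0$ for every $j$.

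Finally, every edge $(i^*,j)$ satisfies $W_{i^*j}(t)\le W^*$ by maximality, so $-\sum_{j\neq j^*}W_{i^*j}(t) \ge -(M-1)W^*$. Substituting gives
\[
\mathbb E[Y_k(t)] \ge W^*\Bigl(1-\tfrac{M-1}{N-1}\Bigr) = \tfrac{N-M}{N-1}\,W^*.
\]
Since the maximum is at least the average, $Y_{k^*}(t)\ge \frac{N-M}{N-1}W^*$. No sign assumption on $W^*$ is needed; in any case Lemma~\ref{lemma:sumofweight} implies $W^*\ge 0$.

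The only step that needs care is the symmetry claim that each channel's assigned node is uniform over $\mathcal I\setminus\{i^*\}$. It follows because permuting the node labels in $\mathcal I\setminus\{i^*\}$ is a bijection on $\mathcal K^*$. The rest is bookkeeping.
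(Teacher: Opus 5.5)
Your proof is correct and follows essentially the same approach as the paper: average over the matchings containing the maximal edge, use Lemma~\ref{lemma:sumofweight} to turn each remaining channel's average contribution into $-W_{i^*j}(t)/(N-1)$, then bound each $W_{i^*j}(t)$ by $W^*$. Your explicit justification of the uniform-marginal claim (relabeling nodes in $\mathcal I\setminus\{i^*\}$ is a bijection on $\mathcal K^*$) supplies a step the paper only asserts.
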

\begin{proof}
Without loss of generality, assume $W_{11}(t) = \max\limits_{(i,j)\in\mathcal{E}} W_{ij}(t)$,
i.e., node-channel pair $(1,1)$ achieves the largest edge weight at time $t$.
We further assume that the max-weight matching among all matchings 
containing $(1,1)$ assigns node $i$ to channel $i$ for all $i=1,\ldots,M$. 
Denote this matching as $\tilde{k}$, so that:
\[
Y_{\tilde{k}}(t) = \sum_{i=1}^{M} W_{ii}(t) = W_{11}(t) + \sum_{i=2}^{M} W_{ii}(t).
\]

Since $Y_{\tilde{k}}(t)$ is the maximum weight over all matchings 
containing $(1,1)$, it is at least as large as the average matching 
weight over all such matchings. For each channel $j\in\{2,\ldots,M\}$, 
averaging uniformly over all $N-1$ remaining nodes yields an average 
per-channel contribution of $\frac{1}{N-1}\sum_{i=2}^{N}W_{ij}(t)$.
By Lemma~\ref{lemma:sumofweight}, $\sum_{i=1}^N W_{ij}(t)=0$, 
so $\sum_{i=2}^N W_{ij}(t) = -W_{1j}(t)$, and the average 
contribution of channel $j$ is $-\frac{W_{1j}(t)}{N-1}$.
Summing over all remaining channels gives:
\[
Y_{\tilde{k}}(t) \geq W_{11}(t) - \frac{1}{N-1}\sum_{j=2}^{M}W_{1j}(t).
\]
Since $W_{11}(t) = \max\limits_{(i,j)\in\mathcal{E}}W_{ij}(t) \geq W_{1j}(t)$ 
for all $j$, we have $\sum_{j=2}^M W_{1j}(t)\leq (M-1)W_{11}(t)$.
Substituting yields:
\[
Y_{\tilde{k}}(t) \geq W_{11}(t) - \frac{M-1}{N-1}W_{11}(t) 
= \frac{N-M}{N-1}W_{11}(t) 
= \frac{N-M}{N-1}\max_{(i,j)\in\mathcal{E}}W_{ij}(t).
\]
Since global maximum matching weight $Y_{k^*}(t)\geq Y_{\tilde{k}}(t)$, this completes the proof.
\end{proof}




We are now ready to analyze the performance of the WMD rule. Consider the Lyapunov function: $L(t):=\frac{1}{2}\sum_{j=1}^M \sum_{i=1}^NW_{ij}^2(t)$ to do the drift analysis. 
Let $\Delta d_{ij}(t):=d_{ij}(t)-d_{ij}(t-1)$, $\Delta D_j(t):= D_j(t)-D_j(t-1)$, and $\Delta L(t):= L(t)-L(t-1)$, we can then derive the expected one-step Lyapunov drift as follows:
\begin{align}
    &E[\Delta L(t)] := E[L(t) - L(t - 1) \mid H^{t-1}]  \nonumber \\
    =&E\Bigg[\frac{1}{2} \sum_{j=1}^M\sum_{i=1}^NW_{ij}^2(t) - \frac{1}{2}  \sum_{j=1}^M\sum_{i=1}^N W_{ij}^2(t-1)\;\Bigg| H^{t-1}\Bigg]  \nonumber \\
    =&E\Bigg[ \sum_{j=1}^M\sum_{i=1}^N \sqrt{{\frac{\sigma_{ij}^2}{p_{ij}^2}}}W_{ij}(t-1) \left( {\Delta d_{ij}(t)} - \Delta D_j(t) \right) \nonumber \\
    &+\frac{1}{2}  \sum_{j=1}^M\sum_{i=1}^N{{\frac{\sigma_{ij}^2}{p_{ij}^2}}}\left( {\Delta d_{ij}(t)} - \Delta D_j(t) \right)^2 \;\Bigg| H^{t-1}\Bigg]  \nonumber \\
    \leq& B + E\Bigg[ \sum_{j=1}^M\sum_{i=1}^N\sqrt{{\frac{\sigma_{ij}^2}{p_{ij}^2}}} W_{ij}(t-1) \Delta d_{ij}(t) \nonumber \\
    &-  \sum_{j=1}^M\sum_{i=1}^N\sqrt{{\frac{\sigma_{ij}^2}{p_{ij}^2}}}W_{ij}(t-1) \Delta D_j(t) \;\Bigg| H^{t-1}\Bigg]  \nonumber \\
    =&B + E\Bigg[ \sum_{j=1}^M \sum_{i=1}^N\sqrt{{\frac{\sigma_{ij}^2}{p_{ij}^2}}}W_{ij}(t-1) \Delta d_{ij}(t) \;\Bigg| H^{t-1}\Bigg],  \label{eq:delta_L}
\end{align}
where $H^{t-1}$ is the system history up to time $t-1$ and \(B\) is a constant. The last two steps hold because \(\Delta d_{ij}(t)\) and \(\Delta D_j(t)\) are bounded, and according to Eq.~\eqref{eq:somean2}, we can derive that \( E[\Delta D_j(t)] = 0 \).






\begin{lemma}\label{lemma:positive_recurrent}
For any vector $\{(\mu_{ij}, \sigma_{ij}^2)\}$ that satisfies Eqs.~\eqref{eq:aoi2}--\eqref{eq:geq02}, the system state process $\{W_{ij}(t)\}$ under the WMD scheduling rule is positive recurrent. 
\end{lemma}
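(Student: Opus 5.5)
The plan is to apply the Foster--Lyapunov criterion to $L(t)$, starting from the drift bound Eq.~\eqref{eq:delta_L}. First I would check that $\{W_{ij}(t)\}$ is a Markov chain. The WMD decision $k^*(t)$ depends only on $\{W_{ij}(t-1)\}$, and the increments $\Delta d_{ij}(t)$, hence $\Delta D_j(t)$ and $\Delta W_{ij}(t)$, depend only on $k^*(t)$ and the independent channel outcomes. So $W(t)$ is a deterministic function of $W(t-1)$ and fresh randomness. The reachable state space is countable, since all increments come from a finite set.

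\textbf{Step 1: the conditional drift term.} Given $H^{t-1}$, we have $E[\Delta d_{ij}(t)\mid H^{t-1}] = (\mu_{ij} - p_{ij}\mathbb{1}\{(i,j)\in k^*(t)\})/\sqrt{\sigma_{ij}^2}$. Hence
\[
\sqrt{\tfrac{\sigma_{ij}^2}{p_{ij}^2}}\,W_{ij}(t-1)\,E[\Delta d_{ij}(t)\mid H^{t-1}] = W_{ij}(t-1)\Big(x_{ij} - \mathbb{1}\{(i,j)\in k^*(t)\}\Big), \qquad x_{ij}:=\tfrac{\mu_{ij}}{p_{ij}}.
\]
Summing over $(i,j)$, Eq.~\eqref{eq:delta_L} becomes $E[\Delta L(t)] \le B + \sum_{i,j} W_{ij}(t-1)x_{ij} - Y_{k^*}(t-1)$.

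\textbf{Step 2: strict slack from Eqs.~\eqref{eq:somean2} and \eqref{eq:geq02}.} By Eq.~\eqref{eq:somean2}, the column sums of $x$ equal $1$ and the row sums satisfy $r_i<1$. By Eq.~\eqref{eq:geq02}, every $x_{ij}>0$. Choose $\epsilon>0$ small enough that $\epsilon/N \le \min_{ij} x_{ij}$ and $r_i - \epsilon M/N \le 1-\epsilon$ for all $i$, and set $y := (x - \tfrac{\epsilon}{N}\mathbf{1})/(1-\epsilon)$. Then $y\ge 0$, its column sums are $1$, and its row sums are at most $1$. The polytope $\{y\ge0:\ \text{column sums}=1,\ \text{row sums}\le1\}$ is integral because the bipartite incidence matrix is totally unimodular. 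So $y$ is a convex combination of feasible matchings in $\mathcal K$, and $\sum W_{ij}y_{ij}\le Y_{k^*}$. By Lemma~\ref{lemma:sumofweight}, $\sum_{i,j}W_{ij}=0$, so the uniform part contributes nothing and
\[
\sum_{i,j}W_{ij}x_{ij} = (1-\epsilon)\sum_{i,j}W_{ij}y_{ij} \le (1-\epsilon)Y_{k^*}.
\]
This gives $E[\Delta L(t)] \le B - \epsilon\, Y_{k^*}(t-1)$.

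\textbf{Step 3: relate $Y_{k^*}$ to $L$.} Lemma~\ref{lemma:yandw} gives $Y_{k^*} \ge \frac{N-M}{N-1}\max_{ij}W_{ij}$. Since each column of $W$ sums to zero (Lemma~\ref{lemma:sumofweight}), the largest positive entry in a column dominates the magnitude of any negative entry up to a factor $N-1$. Hence $\max_{ij} W_{ij} \ge \max_{ij}|W_{ij}|/(N-1)$. Also $L \le \tfrac{NM}{2}\max_{ij}|W_{ij}|^2$. Combining,
\[
E[\Delta L(t)] \le B - c\sqrt{L(t-1)}
\]
for a constant $c>0$. The drift is therefore at most $-1$ outside the bounded set $\mathcal{S}=\{L\le ((B+1)/c)^2\}$, and it is bounded by $B$ inside $\mathcal{S}$.

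\textbf{Main obstacle.} The hard part is concluding positive recurrence from this drift. Foster's theorem needs $\mathcal S$ to be finite, or at least petite, but with irrational $\mu_{ij}$ the bounded set may contain infinitely many reachable states. I would first try to restrict attention to the reachable state space and argue, using irreducibility from the strictly positive $p_{ij}$, that $\mathcal S$ is petite or small. If that fails, I would fall back to the weaker conclusion the later sections actually need: $\sup_t E[L(t)]<\infty$ together with return to $\mathcal S$ in finite expected time. The inequality $E[\Delta L]\le B - c\sqrt{L}$ gives both of these directly.
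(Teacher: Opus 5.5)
Up to the drift inequality, your argument is the paper's. It uses the same $L(t)$ and bound \eqref{eq:delta_L}, the same conditional term $\sum_{i,j}W_{ij}x_{ij}-Y_{k^*}$, slack from the strict inequalities in \eqref{eq:somean2} and \eqref{eq:geq02} combined with the zero-sum property of Lemma~\ref{lemma:sumofweight}, and Lemma~\ref{lemma:yandw} to pass from matchings to edges. Where you differ, you are more careful:
\begin{itemize}
\item The paper takes a Birkhoff--von Neumann decomposition and simply asserts that every $\pi_k>0$. The decomposition theorem does not give this; it needs exactly your interior-point perturbation $x=(1-\epsilon)y+\frac{\epsilon}{N}\mathbf{1}$.
\item The paper only shows $E[\Delta L]<0$ once $\max_{ij}W_{ij}$ is large. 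It leaves implicit why $\{\max_{ij}W_{ij}\le C\}$ is bounded and why the drift is uniformly negative; your column-sum argument and the bound $B-c\sqrt{L}$ supply both.
\end{itemize}

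The step you flag is exactly where the paper stops: it says ``negative drift outside a compact set'' and appeals to Foster--Lyapunov without checking finiteness or petiteness. So you are not missing an idea the paper has. Your worry is justified, and your first remedy (petiteness via irreducibility) cannot work in general. Let $S_{ij}(t):=\sum_{\tau\le t}Z_{ij}(\tau)$ and take $N=2$, $M=1$, $p_{11}=p_{21}=\tfrac12$. Then \eqref{eq:sovariance2} forces $\sigma_{11}+\sigma_{21}=\tfrac12$. With $\theta:=2\sigma_{11}$, a direct computation gives $W_{11}(t)=-W_{21}(t)=t\,x_{11}+\theta\bigl(2S_{11}(t)+2S_{21}(t)-t\bigr)-2S_{11}(t)$. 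Suppose $1,x_{11},\theta$ are linearly independent over $\mathbb{Q}$; nothing in \eqref{eq:aoi2}--\eqref{eq:geq02} prevents this. Then the state determines $(t,S_{11}(t),S_{21}(t))$, so no state is ever visited twice, and the chain is transient as a countable chain. Your bounded set $\mathcal S$ must contain infinitely many visited states, because the drift keeps returning the chain there. It therefore contains states with arbitrarily large time index, and from such a state only later time indices are reachable, so $\mathcal S$ cannot be petite. Positive recurrence in the standard sense thus fails here even though the chain is stable. It holds only in special cases, e.g.\ when $x_{ij}$, $1/p_{ij}$ and the ratios $\sqrt{\sigma_{ij}^2/p_{ij}^2}\big/\sum_{i'}\sqrt{\sigma_{i'j}^2/p_{i'j}^2}$ are all rational. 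Then the state lives on a lattice and $\mathcal S$ is finite.

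So your fallback is not a retreat but the provable content, and it is all that Theorem~\ref{theorem:verification} uses. Finite expected hitting time of $\mathcal S$ follows directly from $E[\Delta L]\le -1$ off $\mathcal S$. For $\sup_t E[L(t)]<\infty$, however, $E[\Delta L]\le B-c\sqrt{L}$ alone is not enough; you also need the bounded increments of $W$. With them, $\sqrt{L}$ has uniformly negative drift outside a bounded set and bounded jumps, and Hajek's drift theorem gives uniformly bounded exponential moments. Since $W_{ij}=\sqrt{\sigma_{ij}^2/p_{ij}^2}\,(d_{ij}-D_j)$, this yields $W_{ij}(T)/\sqrt{T}\to 0$ in $L^2$. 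By Chebyshev and Borel--Cantelli it also yields $W_{ij}(T)/T\to 0$ almost surely. These are exactly \eqref{eq:sqrt_T} and \eqref{eq:T}.
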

\begin{proof}\label{proof:positive_recurrent}

By the design of WMD, the BS schedules the matching $k^*$ with the highest 
weight at time $t$. Hence, The expected increment of $d_{ij}(t)$ is:
\begin{equation}
E[\Delta d_{ij}(t)] =
\begin{cases}
    \frac{\mu_{ij}-p_{ij}}{\sigma_{ij}}, & \quad \text{if}  (i,j) \in k^*, \\
\frac{\mu_{ij}}{\sigma_{ij}}, & \text{otherwise}.
\end{cases}
\label{eq:delta_di}
\end{equation}

We then have
\begin{align*}
&E \left[ \sum_{j=1}^M\sum_{i=1}^{N}\sqrt{\frac{\sigma_{ij}^2}{p_{ij}^2}}W_{ij}(t) \Delta d_{ij}(t) \right]  =\sum_{j=1}^M\sum_{i=1}^{N} W_{ij}(t) \frac{\mu_{ij}}{p_{ij}}-\sum_{(i,j)\in k^*}W_{ij}(t)
\end{align*}
By the feasibility of the first stage and the Birkhoff--von Neumann 
theorem, there exists a probability distribution $\{\pi_k\}_{k\in\mathcal{K}}$ 
over matchings such that
\[
\frac{\mu_{ij}}{p_{ij}} = \sum_{k=1}^K \pi_k \,\mathbf{1}\{(i,j)\in k\},
\quad \forall\, i,j,
\]
where $\pi_k > 0$ and $\sum_{k=1}^K \pi_k = 1$. 


Then, we have:

$E\left[ \sum_{j=1}^M\sum_{i=1}^{N}\sqrt{\frac{\sigma_{ij}^2}{p_{ij}^2}}W_{ij}(t) \Delta d_{ij}(t) \right]=\sum_{r=1}^K \pi_r Y_{k_r}(t) - Y_{k^*}(t)$.

Applying the telescope with the ordering 
$Y_{k_1}(t)\geq Y_{k_2}(t)\geq\cdots\geq Y_{k_K}(t)$ 
and $k_1 = k^*$:
\begin{align} \nonumber
\sum_{r=1}^K \pi_r Y_{k_r}(t) - Y_{k_1}(t)
&= \sum_{r=1}^{K-1}(Y_{k_r}(t)-Y_{k_{r+1}}(t))
\left(\sum_{j=1}^{r}\pi_j - 1\right). \label{eq:telescope}
\end{align}

Let 
$0<\varepsilon < \frac{1}{2}\min\limits_r\pi_r .$
Then, for any \( r \in \{1, 2, \dots, K-1\} \):
\begin{equation}
\sum_{j=1}^{r} \pi_j - 1 = -\sum_{j=r+1}^{K} \pi_j \leq -\pi_{r+1} < -\varepsilon.
\label{eq:ineq1}
\end{equation}
Therefore,

\begin{align}   \nonumber
&E\left[ \sum_{j=1}^M\sum_{i=1}^{N}\sqrt{\frac{\sigma_{ij}^2}{p_{ij}^2}}W_{ij}(t) \Delta d_{ij}(t) \right]< -\varepsilon \left( Y_{k_1}(t) - Y_{k_K}(t) \right)<-\varepsilon  Y_{k_1}(t) 
\label{eq:expectation_ineq}
\end{align}
The last step holds since $Y_{k_K}(t) \leq 0$, 
which follows from Lemma~\ref{lemma:sumofweight} that $\sum_{k \in \mathcal{K}} Y_k(t)=\sum\limits_{(i,j)\in k}W_{ij}(t)= 0$ .

Thus, if $\max_{k} Y_k(t) > \frac{B}{\varepsilon},$ then \( E[\Delta L] < 0 \).



By Lemma~\ref{lemma:yandw}, we can derive that if $\max_{(i,j)\in\mathcal{E}} W_{ij}(t)>\frac{N-1}{N-M}\frac{B}{\varepsilon},$ then $\max_k Y_k(t) > \frac{B}{\varepsilon},$
which implies $\mathbb{E}[\Delta L] < 0$. Therefore, under our scheduling rule, the state $\{W_{ij}(t)\}$ follows a Markov process with negative drift outside a compact set. By the Foster-Lyapunov theorem, the system-wide Markov process is positive recurrent.
\end{proof}

We now show that the WMD rule achieves the desired throughputs and AoIs that return from the first stage, thereby establishing Theorem ~\ref{theorem:verification}.\\
\begin{theorem}\label{theorem:verification}
Assume that the first stage conditions return a feasible solution $\{(\mu_{ij},\sigma_{ij}^2)\}$. Then, under the WMD rule, the following holds:
\begin{equation}\label{eq:thm_meanvar}
     \mu_i = \lim_{T \to \infty} \frac{\sum_{t=1}^T Z_i(t)}{T}, \quad \sigma_i^2 = \mathbb{E}\!\left[\left(\lim_{T \to \infty}\frac{\sum_{t=1}^T Z_i(t) - T\mu_i}{\sqrt{T}}\right)^2\right].
\end{equation}

\end{theorem}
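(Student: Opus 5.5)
The plan is to prove the statement pair by pair: show that each node–channel process $\{Z_{ij}(t)\}$ has long-run mean $\mu_{ij}$ and temporal variance $\sigma_{ij}^2$, and then aggregate over channels using Eq.~\eqref{eq:sumofmean2} and Eq.~\eqref{eq:sumofvariance}. Every step rests on Lemma~\ref{lemma:positive_recurrent}. Since $\{W_{ij}(t)\}$ is positive recurrent, the weights are stochastically bounded: their time averages and their second moments stay finite in the long run.

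\textbf{Means.} Write the cumulative deficit as
\[
t\mu_{ij}-\sum_{\tau\le t}Z_{ij}(\tau)=\sigma_{ij}d_{ij}(t)=\frac{p_{ij}W_{ij}(t)}{1}+\sigma_{ij}D_j(t).
\]
The $W_{ij}$ term is $O(1)$ in the positive-recurrent sense, so after dividing by $t$ it vanishes almost surely.

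The channel-wide term needs separate treatment. The quantity $\sum_i \frac{\sigma_{ij}}{p_{ij}}d_{ij}(t)$ equals
\[
\sum_i \frac{t\mu_{ij}}{p_{ij}}-\sum_i\frac{1}{p_{ij}}\sum_{\tau}Z_{ij}(\tau).
\]
By Eq.~\eqref{eq:somean2} the first sum is exactly $t$. The second sum is $\sum_\tau \sum_i Z_{ij}(\tau)/p_{ij}$, and exactly one node is scheduled on channel $j$ in each slot. Each increment of this sum therefore has conditional mean $1$, so the whole expression is a martingale with bounded increments, whatever the scheduling decisions are. Hence $D_j(t)$ is a normalized martingale, and $D_j(t)/t\to0$ almost surely. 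This gives $\mu_{ij}$ for every pair, and summing over $j$ gives $\mu_i$.

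\textbf{Variances.} Divide by $\sqrt T$ instead of $T$. The $W_{ij}(T)/\sqrt T$ term vanishes, so the fluctuation of pair $(i,j)$ is asymptotically $\sigma_{ij}D_j(T)/\sqrt T$. The quadratic variation of the martingale $M_j(t)$ above accumulates, in each slot, the variance $(1/p_{ij}-1)$ of whichever node $i$ is scheduled. Since the empirical scheduling frequency of pair $(i,j)$ converges to $\mu_{ij}/p_{ij}$, the martingale CLT gives
\[
\frac{M_j(T)}{\sqrt T}\Rightarrow \mathcal N\!\left(0,\ \sum_i\frac{\mu_{ij}}{p_{ij}}\Bigl(\frac1{p_{ij}}-1\Bigr)\right).
\]
Equality Eq.~\eqref{eq:sovariance2} says this variance equals $\bigl(\sum_i\sigma_{ij}/p_{ij}\bigr)^2$. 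Because $D_j=M_j/\sum_i(\sigma_{ij}/p_{ij})$, the limit of $D_j(T)/\sqrt T$ has unit variance, and the fluctuation of pair $(i,j)$ has variance exactly $\sigma_{ij}^2$. This is precisely where the tight boundary condition is used.

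\textbf{Aggregating to node $i$.} The deviation of $\{Z_i(t)\}$ is asymptotically $\sum_j \sigma_{ij}D_j(T)/\sqrt T$. Its variance equals $\sum_j\sigma_{ij}^2$, as Eq.~\eqref{eq:sumofvariance} requires, provided the martingales $M_j$ and $M_{j'}$ are asymptotically uncorrelated for $j\neq j'$. Their increments in a slot come from distinct nodes on distinct channels, and transmission outcomes are independent. Each increment is centered given $H^{t-1}$, so the cross-covariations vanish. The joint martingale CLT then gives independent Gaussian limits.

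\textbf{Main obstacle.} Turning convergence in distribution into convergence of second moments, as required by the definition of $\sigma_i^2$, is the step I expect to be hardest. For this I would establish uniform integrability. For the martingale part, bounded increments give $E[M_j(T)^2]/T$ bounded. For the $W$ part, I would strengthen the drift argument of Lemma~\ref{lemma:positive_recurrent} to show that $\sup_t E[W_{ij}(t)^2]<\infty$, using the fact that $L$ has negative drift outside a compact set and bounded increments, for instance via Hajek's theorem.
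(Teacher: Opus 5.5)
Your proposal is correct and follows essentially the same route as the paper. You split $\sigma_{ij}d_{ij}(T)$ into the positive-recurrent $W_{ij}$ part and the channel-wide term $D_j$, which is a rescaling of the martingale $X_j(t)=t-\sum_{\tau\le t}\sum_i Z_{ij}(\tau)/p_{ij}$ (your $M_j$). You then get the means from Eq.~\eqref{eq:somean2} and the law of large numbers, and unit variance of $D_j(T)/\sqrt{T}$ from the martingale CLT and Eq.~\eqref{eq:sovariance2}. Finally you remove the cross-channel covariances via conditional independence of the increments before summing with Eq.~\eqref{eq:sumofvariance}.

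Your explicit quadratic-variation computation and the uniform-integrability step add rigor the paper skips, since it passes limits through expectations without comment. For the uniform-integrability step, note that $L$ itself does not have bounded increments. Apply Hajek's theorem to $\sqrt{L}$ instead: its increments are bounded, and its drift is uniformly negative, because the drift bound of Lemma~\ref{lemma:positive_recurrent} together with Lemmas~\ref{lemma:sumofweight} and~\ref{lemma:yandw} gives $\mathbb{E}[\Delta L]\le B-c\sqrt{L}$ for some $c>0$.
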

\begin{proof}\label{proof:verification}
According to Lemma ~\ref{lemma:positive_recurrent}, the system-wide Markov process $\{W_{ij}(t)\}$ is positive recurrent under the WMD policy, then, for each channel $j$, we have:
\begin{equation}\label{eq:T}
\lim_{T \to \infty} \frac{d_{ij}(T)- D_j(T)}{T} \to 0, \quad \forall i,
\end{equation}
\begin{equation}\label{eq:sqrt_T}
\lim_{T \to \infty} \frac{d_{ij}(T) - D_j(T)}{\sqrt{T}} \to 0, \quad \forall i. 
\end{equation}

First, we show the throughput claim in Eq.~\eqref{eq:thm_meanvar}.

Recall that \(d_{ij}(t) = \frac{t \mu_{ij} - \sum_{\tau=1}^{t} Z_{ij}(\tau)}{\sqrt{\sigma_{ij}^2}}.\) and \(D_j(t) = \frac{\sum_{i=1}^N \sqrt{\frac{\sigma_{ij}^2}{p_{ij}^2}} d_{ij}(t)}{\sum_{i=1}^N \sqrt{\frac{\sigma_{ij}^2}{p_{ij}^2}} }\). By Eq.~\eqref{eq:somean2}, we have:

\begin{align}
&\lim_{T \to \infty} \frac{D_j(T)}{T} \nonumber
= \lim_{T \to \infty} 
\frac{\sum_{i=1}^N\sqrt{\frac{1}{p_{ij}^2}}( T {\mu_{ij}} - \sum_{t=1}^T {Z_{ij}(t)})}
{T \sum_{i=1}^N \sqrt{\frac{\sigma_{ij}^2}{p_{ij}^2}}} \\ 
&= \lim_{T \to \infty} \frac{1 - \frac{1}{T} \sum_{i=1}^N \sum_{t=1}^T \frac{Z_{ij}(t)}{p_{ij}}}
{\sum_{i=1}^N \sqrt{\frac{\sigma_{ij}^2}{p_{ij}^2}}}=0\label{eq:mean3}
\end{align}

Since, according to the Law of Large Numbers:

\(\frac{1}{T} \sum_{i=1}^N \sum_{t=1}^T \frac{Z_{ij}(t)}{p_{ij}} \to 1\) as \(T \to \infty\).

Combining Eq.~\eqref{eq:T} and Eq.~\eqref{eq:mean3}, we have:

$\lim\limits_{T \to \infty} \frac{d_{ij}(T)}{T} = \frac{\mu_{ij}}{\sqrt{\sigma_{ij}^2}} - \lim\limits_{T \to \infty} \frac{\sum_{t=1}^T {Z_{ij}(t)}}{{\sqrt{\sigma_{ij}^2}}T} = 0, \quad \text{for all } i.$

Rearranging the terms, we conclude:
\(\lim\limits_{T \to \infty} \frac{\sum_{t=1}^T Z_{ij}(t)}{T} = \mu_{ij}\). Then,

$\sum_{j=1}^M\lim\limits_{T \to \infty} \frac{\sum_{t=1}^T Z_{ij}(t)}{T}=\lim\limits_{T \to \infty} \frac{\sum_{t=1}^T Z_i(t)}{T}=\sum_{j=1}^M\mu_{ij}=\mu_i.$

Next, we show the variance claim in Eq.~\eqref{eq:thm_meanvar}. Following the approach in Wang and Hou~\cite{lin}, we define the projected process $X_j(t)$ for channel $j$ as: $X_j(t) := t - \sum_{\tau=1}^t \sum_{i=1}^N \frac{Z_{ij}(\tau)}{p_{ij}}.$ According to the Martingale Central Limit Theorem,  we can derive:$ \lim\limits_{T \to \infty} \frac{X_j(T)}{\sqrt{T}} \sim \mathcal{N}\left(0, (\sum_{i=1}^N \sqrt\frac{\sigma_{ij}^2}{p_{ij}^2})^2 \right).$

Then:

\begin{align}\label{eq:variance3}
&\lim_{T \to \infty} \frac{D_j(T)}{\sqrt{T}} 
= \lim_{T \to \infty} 
\frac{\sum_{i=1}^N \sqrt{\frac{1}{p_{ij}^2}} \left( T \mu_{ij} - \sum_{t=1}^T Z_{ij}(t) \right)}
{\sqrt{T} \sum_{i=1}^N \sqrt{\frac{\sigma_{ij}^2}{p_{ij}^2}}} \nonumber \\
&= \lim_{T \to \infty} 
\frac{T - \sum_{i=1}^N \sum_{t=1}^T \frac{Z_{ij}(t)}{p_{ij}}}
{\sqrt{T} \sum_{i=1}^N \sqrt{\frac{\sigma_{ij}^2}{p_{ij}^2}}} 
= \lim_{T \to \infty} 
\frac{X_j(T)}{\sqrt{T} \sum_{i=1}^N \sqrt{\frac{\sigma_{ij}^2}{p_{ij}^2}}}.
\end{align}

Therefore, we have: $\lim\limits_{T \to \infty} \frac{D_j(T)}{\sqrt{T}} \sim \mathcal{N}(0,1).$

Furthermore, using Eq.~\eqref{eq:sqrt_T} and Eq.~\eqref{eq:variance3}, we derive:
\begin{align}
&\mathbb{E}\left[\left(\lim_{T \to \infty} \frac{\sum_{t=1}^T Z_{ij}(t) - T \mu_{ij}}{\sqrt{T}}\right)^2\right]= \mathbb{E}\left[\left(\lim_{T \to \infty} \frac{\sigma_{ij} \cdot d_{ij}(T)}{\sqrt{T}}\right)^2\right] \nonumber \\
&= \sigma_{ij}^2 \cdot \mathbb{E}\left[\left(\lim_{T \to \infty} \frac{D_j(T)}{\sqrt{T}}\right)^2\right]= \sigma_{ij}^2.
\end{align}

We can also get:
\begin{align}\label{eq:d_ijto_X}
    \lim\limits_{T \to \infty} \frac{d_{ij}(T)}{\sqrt{T}}\to\lim\limits_{T \to \infty} 
\frac{X_j(T)}{\sqrt{T} \sum_{i=1}^N \sqrt{\frac{\sigma_{ij}^2}{p_{ij}^2}}}.
\end{align}

Define the deficit of each node $i$ as: $d'_i(t):=t\mu_i - \sum_{t=1}^T Z_i(t)=\sum_{j=1}^M(t\mu_{ij}-\sum^t_{\tau=1}Z_{ij}(\tau))$. 

Then, according to Eq.~\eqref{eq:d_i}:
$d'_i(t)=\sum_{j=1}^Md_{ij}(t)\sqrt{\sigma^2_{ij}}$.



Denote $a_{ij}:=\frac{\sqrt{\sigma^2_{ij}}}{ \sum_{i=1}^N \sqrt{\frac{\sigma_{ij}^2}{p_{ij}^2}}}$, then combining Eq.~\eqref{eq:d_ijto_X} and the definition of the temporal variance for node $i$:
\begin{align}
&\mathbb{E}\left[\left(\lim_{T\to\infty}
\frac{\sum_{t=1}^T Z_i(t)-T\mu_i}{\sqrt{T}}\right)^2\right] = \mathbb{E}\left[\left(\lim_{T\to\infty}-\frac{d'_i(T)}{\sqrt{T}}\right)^2\right]\nonumber\\
&= \mathbb{E}\left[\left(\lim_{T\to\infty}-\frac{\sum_{j=1}^Md_{ij}(T)\sqrt{\sigma^2_{ij}}}{\sqrt{T}}\right)^2\right] = \mathbb{E}\left[\left(\lim_{T\to\infty}
\sum_{j=1}^M a_{ij}\frac{X_j(T)}{\sqrt{T}}\right)^2\right]   \nonumber\\
&= \sum_{j=1}^M a_{ij}^2\,\mathbb{E}\left[\left(\lim_{T\to\infty}
\frac{X_j(T)}{\sqrt{T}}\right)^2\right] \nonumber\\
&\quad + 2\sum_{j=1}^{M}\sum_{j'=j+1}^{M} a_{ij}a_{ij'}\,
\mathrm{Cov}\left(\lim_{T\to\infty}\frac{X_j(T)}{\sqrt{T}},\,
\lim_{T\to\infty}\frac{X_{j'}(T)}{\sqrt{T}}\right).
\label{eq:33}
\end{align}

We now show that, under the WMD rule, the limiting projected
processes of distinct channels are uncorrelated.

Write $\Delta X_j(\tau):=X_j(\tau)-X_j(\tau-1)
=1-\sum_{i=1}^N \frac{Z_{ij}(\tau)}{p_{ij}}$, with $X_j(0)=0$.
Given $H^{\tau-1}$, the matching $k^*(\tau)$ is determined and channel
$j$ is assigned to exactly one node $i$, whose transmission succeeds
with probability $p_{ij}$. Hence
\[
\mathbb{E}\!\left[\Delta X_j(\tau)\mid H^{\tau-1}\right]
= 1-\frac{p_{ij}}{p_{ij}}=0,
\]
so $\{X_j(t)\}$ is a martingale with respect to the filtration
$\{H^{t}\}$.

For $j\neq j'$, a matching assigns channels $j$ and $j'$ to two
\emph{distinct} nodes, and transmissions on orthogonal channels
succeed independently. Thus $\Delta X_j(\tau)$ and
$\Delta X_{j'}(\tau)$ are conditionally independent given
$H^{\tau-1}$, which gives
\[
\mathbb{E}\!\left[\Delta X_j(\tau)\,\Delta X_{j'}(\tau)\mid H^{\tau-1}\right]
=\mathbb{E}\!\left[\Delta X_j(\tau)\mid H^{\tau-1}\right]
 \mathbb{E}\!\left[\Delta X_{j'}(\tau)\mid H^{\tau-1}\right]=0 .
\]
For $\tau>s$, $\Delta X_{j'}(s)$ is $H^{\tau-1}$-measurable, so
\[
\mathbb{E}\!\left[\Delta X_j(\tau)\,\Delta X_{j'}(s)\right]
=\mathbb{E}\!\left[\Delta X_{j'}(s)\,
 \mathbb{E}[\Delta X_j(\tau)\mid H^{\tau-1}]\right]=0,
\]
and symmetrically for $\tau<s$. Expanding
$X_j(T)=\sum_{\tau=1}^{T}\Delta X_j(\tau)$, every term vanishes:
\[
\mathbb{E}\!\left[X_j(T)X_{j'}(T)\right]
=\sum_{\tau=1}^{T}\sum_{s=1}^{T}
\mathbb{E}\!\left[\Delta X_j(\tau)\Delta X_{j'}(s)\right]=0 .
\]
Since $\mathbb{E}[X_j(T)]=\mathbb{E}[X_{j'}(T)]=0$, dividing by $T$
yields
\begin{equation}\label{eq:cov}
\mathrm{Cov}\!\left(\lim_{T\to\infty}\frac{X_j(T)}{\sqrt{T}},\;
\lim_{T\to\infty}\frac{X_{j'}(T)}{\sqrt{T}}\right)
=\lim_{T\to\infty}\frac{1}{T}\mathbb{E}\!\left[X_j(T)X_{j'}(T)\right]=0 .
\end{equation}
Accordingly:
\begin{align*}
   & \mathbb{E}\left[\left(\lim_{T \to \infty} 
\frac{\sum_{t=1}^T Z_i(t) - T \mu_i}{\sqrt{T}}\right)^2\right] = \sum_{j=1}^M a_{ij}^2\, \mathbb{E}\left[\left(\lim_{T\to\infty}
\frac{X_j(T)}{\sqrt{T}}\right)^2\right]\\
&=\sum_{j=1}^M\frac{\sigma_{ij}^2}{\left(\sum_{i=1}^N\sqrt{\frac{\sigma_{ij}^2}{p_{ij}^2}}\right)^2}
\cdot \left(\sum_{i=1}^N\sqrt{\frac{\sigma_{ij}^2}{p_{ij}^2}}\right)^2 
= \sum_{j=1}^M\sigma_{ij}^2.
\end{align*}

Then, according to Eq.~\eqref{eq:sumofvariance}:
\begin{align*}
\mathbb{E}\left[\left(\lim_{T \to \infty} \frac{\sum_{t=1}^T Z_i(t) - T \mu_i}{\sqrt{T}}\right)^2\right]&=\sum_{j=1}^M\sigma_{ij}^2=\sigma^2_i
\end{align*}
which completes the proof of Eq.~\eqref{eq:thm_meanvar}.

\end{proof}
\section{OPTIMALITY OF THE TSDM POLICY UNDER NODE-HOMOGENEOUS SYSTEMS}
\label{section:specialcase}

The first stage of our policy finds the optimal $\{(\mu_{ij},\sigma_{ij}^2)\}$ under the constraints Eq.~\eqref{eq:aoi2}--Eq.~\eqref{eq:geq02}. Among these, Eq.~\eqref{eq:aoi2}--Eq.~\eqref{eq:sovariance2} are necessary conditions imposed by the system, whereas Eq.~\eqref{eq:sumofvariance} is an additional constraint introduced by our policy. In this section, we study whether imposing Eq.~\eqref{eq:sumofvariance} negatively impacts system performance. We focus on \emph{node-homogeneous} systems, where all nodes experience the same channel reliability on each channel; that is, for each channel $j$, there exists a $p_j$ such that $p_{ij} = p_j$ for all $i$, while reliabilities may still vary across channels.

Consider an arbitrary scheduling policy $\eta$. We use $\mu_{ij}(\eta)$, $\sigma^2_{ij}(\eta)$, $\mu_i(\eta)$, $\sigma^2_i(\eta)$, $m_i(\eta)$, and $h_i(\eta)$ to denote the values of $\mu_{ij}$, $\sigma^2_{ij}$, $\mu_i$, $\sigma^2_i$, $m_i$, and $h_i$ achieved under $\eta$, respectively. We say $\eta$ is a \emph{non-singular} policy if it satisfies the following two mild conditions:
\begin{enumerate}[label=(\roman*)]
    \item $\mu_{ij}(\eta) > 0$ for all $i\in\mathcal{I}$ and $j\in\mathcal{J}$, i.e., each node is scheduled on each channel with a strictly positive long-run fraction of time; and
    \item $\sum_{j=1}^{M} \frac{\mu_{ij}(\eta)}{p_{ij}} < 1$ for all $i\in\mathcal{I}$, i.e., no node occupies all available scheduling opportunities.
\end{enumerate}

We show that in node-homogeneous systems, the additional constraint Eq.~\eqref{eq:sumofvariance} incurs no performance loss. Specifically, for any utility function $U_i(m_i, h_i)$ that is increasing in $m_i$ and decreasing in $h_i$, TSDM achieves an aggregate utility no smaller than that of any non-singular scheduling policy, in terms of the AoI approximation Eq.~\eqref{eq:aoi_appro}:
\begin{align}\label{optimalutility}
\sum_{i=1}^N U_i(m_i, h_i)
\geq
\sum_{i=1}^N U_i(m_i(\eta), h_i(\eta)),
\quad \text{for all}\, \text{non-singular } \eta.
\end{align}

We begin with a lower bound on the sum of temporal standard deviations $\sum_{i=1}^{N}\sigma_i(\eta)$, where $\sigma_i(\eta) := \sqrt{\sigma_i^2(\eta)}$.

\begin{lemma}\label{lemma:lowerbound}
In node-homogeneous systems where $p_{ij}=p_j$, every non-singular policy $\eta$ satisfies:
\[
\sum_{i=1}^{N}\sigma_i(\eta) \;\ge\;
\sqrt{\sum_{j=1}^{M}(p_j - p_j^2)}.
\]
\end{lemma}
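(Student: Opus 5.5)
Under $\eta$, collapse the $M$ channels into a single aggregate delivery process and apply the single-channel variance bound of Wang and Hou~\cite{lin} to it.

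\textbf{Step 1: the aggregate martingale.} Define
\[
X(t):=\sum_{j=1}^M X_j(t)=\sum_{j=1}^M\Bigl(t-\sum_{\tau=1}^t\sum_{i=1}^N \frac{Z_{ij}(\tau)}{p_j}\Bigr).
\]
In every slot each channel $j$ is assigned to exactly one node, and that transmission succeeds with probability $p_j$ independently of everything else. Hence each increment $\Delta X_j(\tau)$ has conditional mean zero given $H^{\tau-1}$ and conditional variance $(1-p_j)/p_j$.

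The same independence argument used in the proof of Theorem~\ref{theorem:verification} shows that increments on distinct channels are conditionally uncorrelated. This argument uses only the fact that distinct channels go to distinct nodes, so it holds for any policy $\eta$, not just WMD. Therefore
\[
\tfrac1T\mathbb E[X(T)^2]\to\sum_j\frac{1-p_j}{p_j},
\]
and $X(T)/\sqrt T$ converges to $\mathcal N\bigl(0,\sum_j (1-p_j)/p_j\bigr)$ by the martingale CLT. This is a variance that no policy can change.

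\textbf{Step 2: reweight so the process involves only the $Z_i$.} The obstacle is that $X(t)$ weights deliveries on channel $j$ by $1/p_j$, while $\sigma_i$ concerns the unweighted sum $Z_i=\sum_j Z_{ij}$. I would therefore work instead with
\[
Y(t):=\sum_j p_j X_j(t)=t\sum_j p_j-\sum_i\sum_{\tau\le t} Z_i(\tau).
\]
Its increments are again a martingale difference sequence with conditional variance $\sum_j p_j^2\cdot(1-p_j)/p_j=\sum_j(p_j-p_j^2)$, and cross terms again vanish. So $Y(T)/\sqrt T$ has limiting variance exactly $\sum_j(p_j-p_j^2)$, which matches the right-hand side of the lemma.

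Since $Y$ involves only the total deliveries, $\sum_j p_j=\sum_i\mu_i(\eta)$ by Eq.~\eqref{eq:neccon} with $p_{ij}=p_j$. This gives
\[
Y(T)=-\sum_{i=1}^N\Bigl(\sum_{\tau\le T}Z_i(\tau)-T\mu_i(\eta)\Bigr).
\]

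\textbf{Step 3: Minkowski.} Write $S_i:=\lim_T(\sum_{\tau\le T} Z_i(\tau)-T\mu_i)/\sqrt T$, so that $\mathbb E[S_i^2]=\sigma_i^2(\eta)$. By Step 2, $\lim_T Y(T)/\sqrt T=-\sum_i S_i$. The $L^2$ triangle inequality then gives
\[
\sqrt{\sum_j(p_j-p_j^2)}=\Bigl\|\sum_i S_i\Bigr\|_2\le\sum_i\|S_i\|_2=\sum_i\sigma_i(\eta),
\]
which is the claim.

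\textbf{Where the difficulty lies.} The main technical care is in justifying the exchange of limits and second moments: that $\mathbb E[(\lim Y(T)/\sqrt T)^2]$ equals $\lim \tfrac1T\mathbb E[Y(T)^2]$, and that the limits $S_i$ exist in $L^2$. I would use non-singularity here. Condition (ii) ensures the $\mu_i(\eta)$ are well defined and strictly inside the capacity region, and I would adopt the same level of rigor as the paper's use of the martingale CLT and the definition of $\sigma_i^2$. The substantive idea, choosing the weights $p_j$ so that the policy-invariant martingale becomes a function of the per-node totals alone, is what makes node-homogeneity essential.
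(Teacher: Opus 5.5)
Your argument is correct and rests on the same two facts as the paper's proof. First, with $p_{ij}=p_j$ the total deliveries $\sum_i Z_i(t)$ have a policy-invariant variance $\sum_j(p_j-p_j^2)$. Second, the $L^2$ triangle inequality (the paper phrases it as Cauchy--Schwarz on the cross-covariances) bounds this by $(\sum_i\sigma_i(\eta))^2$.

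What differs is the object to which the inequality is applied. The paper works at a single slot $t$ and then identifies $\mathrm{Var}(Z_i(t))$ with $\sigma_i^2(\eta)$. That conflates per-slot variance with temporal variance, and the two differ in general. For example, take $N=2$, $M=1$ and a deterministic alternating schedule: in each slot one node has $\mathrm{Var}(Z_i(t))=p_1(1-p_1)$ and the other has $0$, yet both have temporal variance $p_1(1-p_1)/2$. Your $Y(T)$ applies the inequality to the $\sqrt{T}$-normalized cumulative deviations, which is what $\sigma_i^2(\eta)$ actually measures. So your route repairs that step of the paper's argument rather than merely re-deriving it.

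You can also simplify. Whatever $\eta$ does, $\sum_i Z_i(t)$ is a sum of independent Bernoulli$(p_j)$ variables, so it is i.i.d.\ over time. Hence $\frac1T\mathbb{E}[Y(T)^2]=\sum_j(p_j-p_j^2)$ holds exactly for every $T$. Applying Minkowski at each finite $T$ and then letting $T\to\infty$ gives the bound with no CLT and no joint convergence of $(S_1,\dots,S_N)$. This reads $\sigma_i^2$ as $\lim_T\frac1T\mathbb{E}[(\cdot)^2]$, the form the paper uses for $\sigma_{ij}^2$.

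Finally, non-singularity plays no role in this lemma. Condition (ii) does not make $\mu_i(\eta)$ well defined, and the bound holds for every policy for which the $\mu_i(\eta)$ and $\sigma_i(\eta)$ exist. Your Step 1 can also be dropped.
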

\begin{proof}
Define the aggregate centered delivery process
\[
S(t) = \sum_{i=1}^{N}\bigl(Z_i(t)-\mathbb{E}[Z_i(t)]\bigr).
\]
Since on each channel $j$ exactly one node is scheduled and succeeds with probability $p_j$ regardless of which node is chosen, the per-channel contributions to $S(t)$ are independent across channels. Hence $\mathbb{E}[S(t)]=0$ and
\[
\mathrm{Var}(S(t)) = \sigma_{\mathrm{tot}}^2
:= \sum_{j=1}^{M}(p_j - p_j^2).
\]
Expanding $\mathrm{Var}(S(t))$ over nodes gives:
\begin{align}
\sigma_{\mathrm{tot}}^2
= \sum_{i=1}^{N}\mathrm{Var}(Z_i)
+ 2\sum_{1\le i<i'\le N}\mathrm{Cov}(Z_i,Z_{i'}),
\label{eq:var-sum-ik}
\end{align}
where the $Z_i := Z_i(t)$ are evaluated at a fixed slot $t$.
By the Cauchy--Schwarz inequality, each cross-term satisfies
\[
|\mathrm{Cov}(Z_i,Z_{i'})|\le
\sqrt{\mathrm{Var}(Z_i)\,\mathrm{Var}(Z_{i'})}
= \sigma_i(\eta)\,\sigma_{i'}(\eta).
\]
Substituting into Eq.~\eqref{eq:var-sum-ik}:
\[
\sigma_{\mathrm{tot}}^2
\le \sum_{i=1}^N\sigma_i^2(\eta)
+ 2\sum_{1\le i<i'\le N}\sigma_i(\eta)\sigma_{i'}(\eta)
= \Bigl(\sum_{i=1}^N\sigma_i(\eta)\Bigr)^2.
\]
Taking square roots on both sides completes the proof.
\end{proof}

We now establish the optimality of TSDM.
\begin{theorem}\label{thm:optimality}
In node-homogeneous systems, for any non-singular policy $\eta$, there exists a feasible input $\{(\mu_{ij}, \sigma_{ij}^2)\}$ to TSDM such that the resulting $\{(\mu_i, \sigma_i^2)\}$ satisfies:
\begin{enumerate}[label=(\roman*)]
    \item $\{(\mu_i,\sigma^2_i)\}$ and $\{(\mu_{ij},\sigma^2_{ij})\}$ satisfy Eq.~\eqref{eq:sumofmean2}--Eq.~\eqref{eq:geq02}.
    \item $\mu_i = \mu_i(\eta)$ and $\sigma_i^2 \leq \sigma_i^2(\eta)$, for all $i\in\mathcal{I}$.
    \item Eq.~\eqref{optimalutility} holds.
\end{enumerate}
\end{theorem}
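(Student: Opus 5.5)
We will construct the TSDM input directly from the non-singular policy $\eta$. For the means we take $\mu_{ij}:=\mu_{ij}(\eta)$. Non-singularity (i) gives $\mu_{ij}>0$, and non-singularity (ii) gives $\sum_j\mu_{ij}/p_j<1$. The channel constraint $\sum_i \mu_{ij}/p_j=1$ in Eq.~\eqref{eq:somean2} holds because every channel is assigned to exactly one node in every slot; this is the necessary condition Eq.~\eqref{eq:neccon} with equality. Setting $\mu_i:=\sum_j\mu_{ij}=\mu_i(\eta)$ then gives Eq.~\eqref{eq:sumofmean2} and the first half of (ii).

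For the variances, node-homogeneity makes the right-hand side of Eq.~\eqref{eq:sovariance2} collapse. It becomes $\sqrt{\sum_i (\mu_{ij}/p_j)(1/p_j-1)}=\sqrt{(1-p_j)/p_j}$, so the constraint on channel $j$ reads
\[
\sum_i \sigma_{ij}=\sqrt{p_j(1-p_j)}=:s_j .
\]
We need positive $\sigma_{ij}$ with these column sums and with row quantities $\sum_j\sigma_{ij}^2\le\sigma_i^2(\eta)$. We choose the product form $\sigma_{ij}:=s_j\,\theta_i$ with weights $\theta_i:=\sigma_i(\eta)/\sum_{i'}\sigma_{i'}(\eta)$. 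Then $\sum_i\theta_i=1$, so each column sum equals $s_j$ and Eq.~\eqref{eq:sovariance2} holds. Define $\sigma_i^2:=\sum_j\sigma_{ij}^2$, which gives Eq.~\eqref{eq:sumofvariance}. This yields
\[
\sigma_i^2=\theta_i^2\sum_j p_j(1-p_j)=\sigma_i^2(\eta)\,\frac{\sum_j(p_j-p_j^2)}{\bigl(\sum_{i'}\sigma_{i'}(\eta)\bigr)^2}\le\sigma_i^2(\eta)
\]
by Lemma~\ref{lemma:lowerbound}. This completes (ii).

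\textbf{Main obstacle: positivity of $\sigma_{ij}$.} The product form needs every $\sigma_i(\eta)>0$, and every $s_j>0$, i.e.\ $p_j<1$. To handle $\sigma_i(\eta)=0$ for some $i$, I would first try to rule it out using non-singularity and the martingale structure of the per-node deficit. If that fails, I would use a perturbation: replace $\theta_i$ by $(\sigma_i(\eta)+\delta)/\sum_{i'}(\sigma_{i'}(\eta)+\delta)$. The ratio then degrades only by a factor tending to $1$ as $\delta\to0$, and strict slack in Lemma~\ref{lemma:lowerbound} absorbs it whenever that lemma's inequality is strict. If $p_j=1$, the constraint forces $\sigma_{ij}=0$, so Eq.~\eqref{eq:geq02} must be read as holding in the limit; I would note this degenerate case explicitly. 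Given (i), Theorem~\ref{theorem:verification} says WMD realizes exactly these $\{(\mu_i,\sigma_i^2)\}$.

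For (iii), the approximation Eq.~\eqref{eq:aoi_appro} is increasing in $\sigma_i^2$ at fixed $\mu_i$. Hence the TSDM point has $m_i=\mu_i=m_i(\eta)$ and $h_i\le h_i(\eta)$ for every $i$. Since each $U_i$ is increasing in $m_i$ and decreasing in $h_i$, summing over $i$ gives Eq.~\eqref{optimalutility}. The first stage maximizes over a feasible set that contains this point, so its optimum can only be larger.
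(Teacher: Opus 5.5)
Your construction is the paper's: the paper takes $\sigma_i=\sigma_i(\eta)\,\sigma_{\mathrm{tot}}/\sum_k\sigma_k(\eta)$ and $\sigma_{ij}=\frac{\sigma_i}{\sigma_{\mathrm{tot}}}\sqrt{p_j-p_j^2}$, which is your $\sigma_{ij}=s_j\theta_i$, and (ii)--(iii) then follow from Lemma~\ref{lemma:lowerbound} and the monotonicity of Eq.~\eqref{eq:aoi_appro} exactly as you argue. The positivity issue you flag is real and the paper's proof ignores it. Non-singularity does not exclude $\sigma_i(\eta)=0$: for example, serving node $i$ on a random channel whenever $t\mu_i-\sum_{\tau\le t}Z_i(\tau)>0$, with $\mu_i$ small, gives a positive recurrent deficit and hence $\sigma_i(\eta)=0$. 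Your $\delta$-perturbation cannot repair (ii) in that case, since $\sigma_i(\eta)=0$ forces $\sigma_i^2=\sum_j\sigma_{ij}^2=0$, contradicting Eq.~\eqref{eq:geq02} regardless of any slack in Lemma~\ref{lemma:lowerbound}; at most it salvages (iii), and only under extra continuity and strict-monotonicity assumptions on the $U_i$.
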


\begin{proof}
Construct the TSDM input parameters as follows:
\[
\mu_i = \mu_i(\eta),\quad \mu_{ij} = \mu_{ij}(\eta), \quad \forall\, i,j,
\]
\[
\sigma_i = \sigma_i(\eta)\cdot\frac{\sigma_{\mathrm{tot}}}{\sum_{k=1}^N\sigma_k(\eta)},
\quad
\sigma_{ij} = \frac{\sigma_i}{\sigma_{\mathrm{tot}}}\sqrt{p_j-p_j^2},
\quad \forall\, i,j.
\]

\textit{Feasibility (part~(i)).}
Since $\eta$ is a feasible policy, the necessary conditions Eq.~\eqref{eq:neccon} ensure that $\mu_i = \mu_i(\eta)$ and $\mu_{ij} = \mu_{ij}(\eta)$ satisfy Eq.~\eqref{eq:sumofmean2} and Eq.~\eqref{eq:somean2}. For Eq.~\eqref{eq:sovariance2}, noting that $\sum_{i=1}^N\sigma_i = \sigma_{\mathrm{tot}}$ by construction:
\[
\sum_{i=1}^N\sigma_{ij}
= \frac{\sqrt{p_j-p_j^2}}{\sigma_{\mathrm{tot}}}\sum_{i=1}^N\sigma_i
= \sqrt{p_j-p_j^2}, \quad \forall\, j.
\]
For Eq.~\eqref{eq:sumofvariance}, using $\sigma_{\mathrm{tot}}^2=\sum_{j=1}^M(p_j-p_j^2)$:
\[
\sum_{j=1}^M\sigma_{ij}^2
= \frac{\sigma_i^2}{\sigma_{\mathrm{tot}}^2}\sum_{j=1}^M(p_j-p_j^2)
= \sigma_i^2, \quad \forall\, i.
\]
Hence $\{(\mu_{ij}, \sigma_{ij}^2)\}$ is a feasible input to TSDM, which then delivers the target $(\mu_i, \sigma_i^2)$ for all $i$ via the WMD rule.

\textit{Variance reduction (part~(ii)).}
By Lemma~\ref{lemma:lowerbound}, $\sum_{k=1}^N\sigma_k(\eta)\ge\sigma_{\mathrm{tot}}$, so
\[
\sigma_i
= \sigma_i(\eta)\cdot\frac{\sigma_{\mathrm{tot}}}{\sum_{k=1}^N\sigma_k(\eta)}
\le \sigma_i(\eta), \quad \forall\, i.
\]
Squaring both sides gives $\sigma_i^2\le\sigma_i^2(\eta)$ for all $i$.

\textit{Utility optimality (part~(iii)).}
Since $\mu_i = \mu_i(\eta)$ and $\sigma_i^2 \le \sigma_i^2(\eta)$, the AoI approximation Eq.~\eqref{eq:aoi_appro} gives $h_i \le h_i(\eta)$ for all $i$. Because $U_i$ is increasing in $m_i$ and decreasing in $h_i$:
\[
U_i(m_i, h_i) \ge U_i(m_i(\eta), h_i(\eta)), \quad \forall\, i.
\]
Summing over all nodes establishes Eq.~\eqref{optimalutility}.
\end{proof}

\section{ SIMULATION RESULTS}\label{section:simulation}
In this section, we present the simulation results for the proposed TSDM framework. We apply our policy to the practical and fundamental problems proposed in Section~\ref{section:model}. In the following simulations, each experiment is conducted over \(1,000,000 \times N\) time slots to ensure a comprehensive evaluation of the scheduling policy. The performance metrics are averaged over 1,000 independent traces to obtain statistically meaningful results. To evaluate the proposed policy, we also include numerical solutions obtained from solving Example~\ref{minaoiwithtp}, ~\ref{proportional_fairness}, which serve as benchmarks and are referred to as the Theoretical results. 
\subsection{Utility Maximization with Soft Throughput Constraints}\label{subsection:softp}

We first study the problem described in Example~\ref{minaoiwithtp}. We consider the problem of minimizing AoI under soft throughput constraints, where each node $i$ has a throughput requirement $q_i$. Any violation incurs a quadratic penalty $C(v_i) = cv_i^2$, where $v_i = (q_i - m_i)^+$ and $c>0$ is a scaling constant chosen to ensure the penalty term is on the same order of magnitude as the AoI term.

As baselines, we consider a multi-channel extension of the Max-Weight scheduling policy~\cite{modiano} and the Partial Index policy~\cite{partialindex}. The Max-Weight scheduling policy keeps track of $ x_i(t+1) = t q_{i} - \sum_{\tau=1}^{t} Z_{i}(\tau)$ for each node $i$, and assigns to every node–channel pair the weight:
 $W_{ij}(t) = \frac{p_{ij}}{2} a_i(t) [a_i(t) + 2] + N^2 p_{ij} x_{i}^+(t)$. It then selects the matching $k$ that maximizes $\sum\limits_{(i,j)\in k}W_{ij}(t)$. Prior work has shown that the Max-Weight policy achieves near-optimal performance in single-channel systems. The Partial Index policy, on the other hand, has been proven asymptotically optimal for AoI minimization in multi-channel systems but does not account for throughput requirements. We therefore include it to quantify the performance loss incurred when throughput constraints are ignored.

We evaluate the three policies in both the homogeneous and heterogeneous settings. In the homogeneous setting, we fix $M=2$ and set $p_{ij}=(0.9,0.3)$ for all $i$. The soft throughput constraint is set as $q_i=\lambda\gamma_i$, where $\gamma_i = \frac{1.6}{N}\sum_{j=1}^{M}\max\limits_{i} p_{ij}$ if $i \leq N/2$ and $\frac{0.4}{N}\sum_{j=1}^{M}\max\limits_{i} p_{ij}$ otherwise, so that $\lambda = 1$ corresponds to the system operating at the maximum achievable capacity. We evaluate three network sizes with $N\in \{10,20,50\}.$ For each network, we evaluate the average Utility, defined as the negative average of the sum of penalty and AoI across different $\lambda$. Simulation results are shown in Fig~\ref{fig:stp}.

\begin{figure*}[t]
    \centering
    \subfigure[$N = 10,M = 2$.]{
       \includegraphics[width=0.3\linewidth]{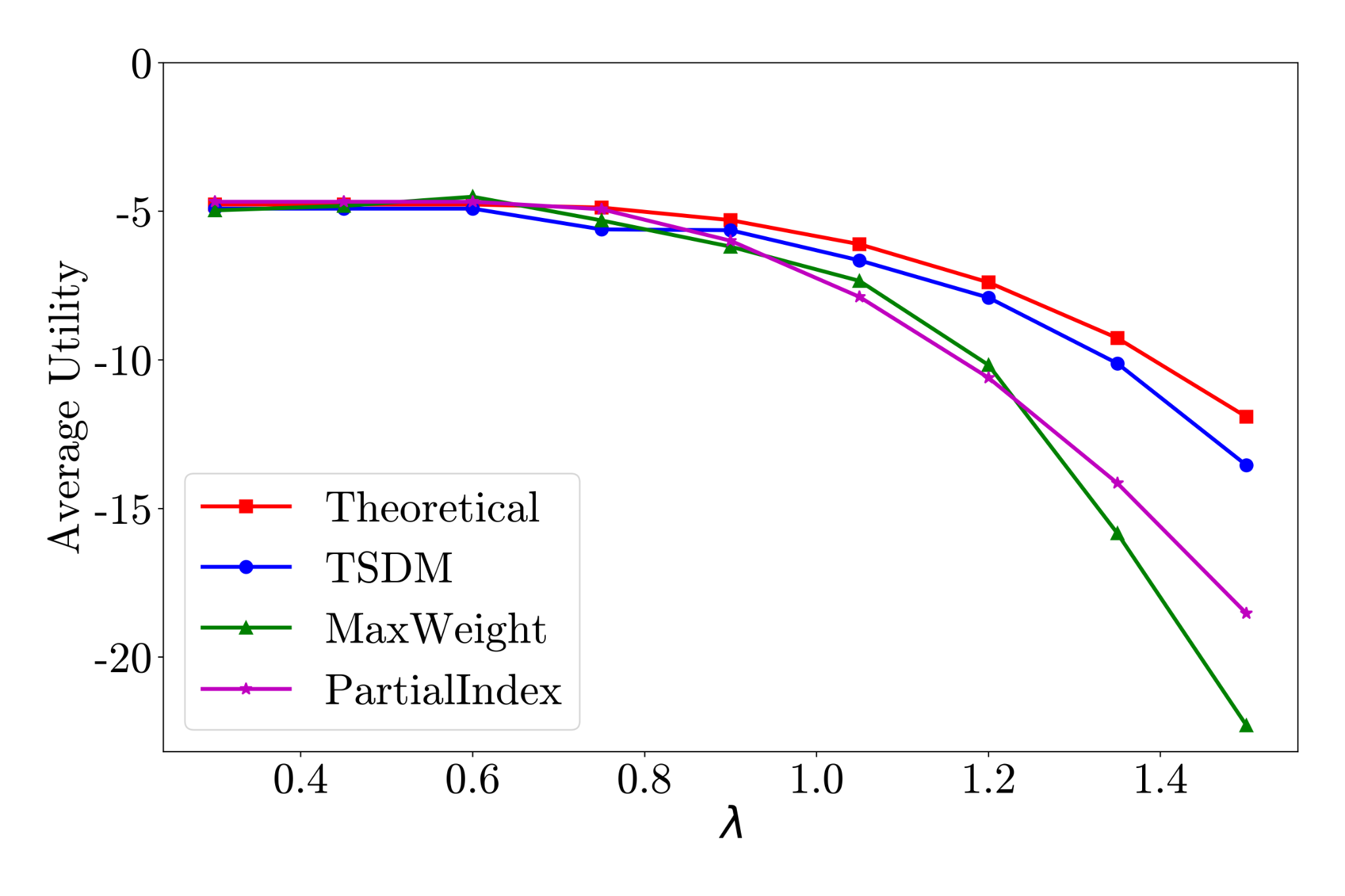}
       \label{fig:hoN10_2}
    }
    \hfill
    \subfigure[ $N = 20,M = 2$.]{
       \includegraphics[width=0.3\linewidth]{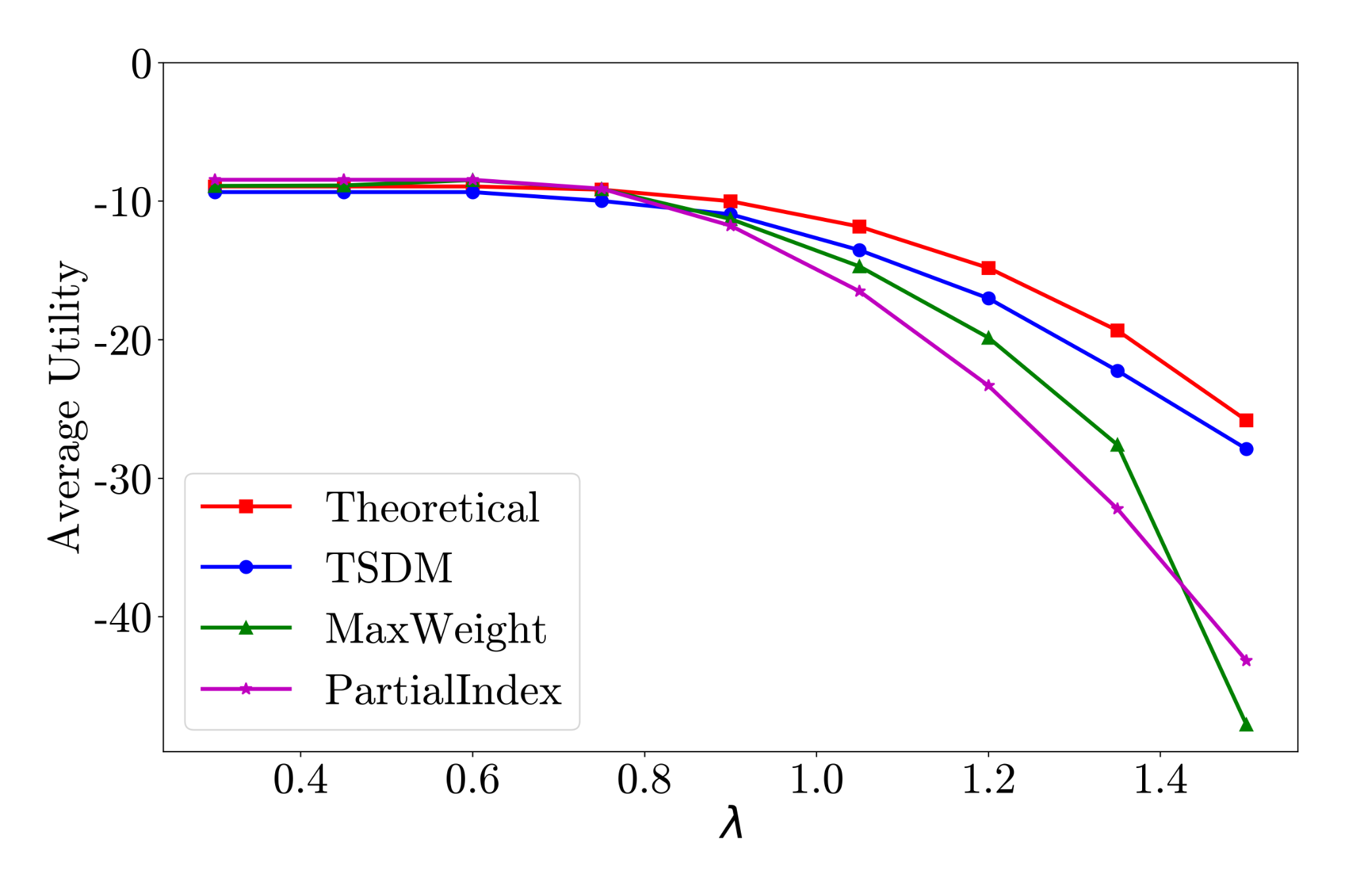}
       \label{fig:hoN20_2}
    }
    \hfill
    \subfigure[$N = 50,M = 2$.]{
       \includegraphics[width=0.3\linewidth]{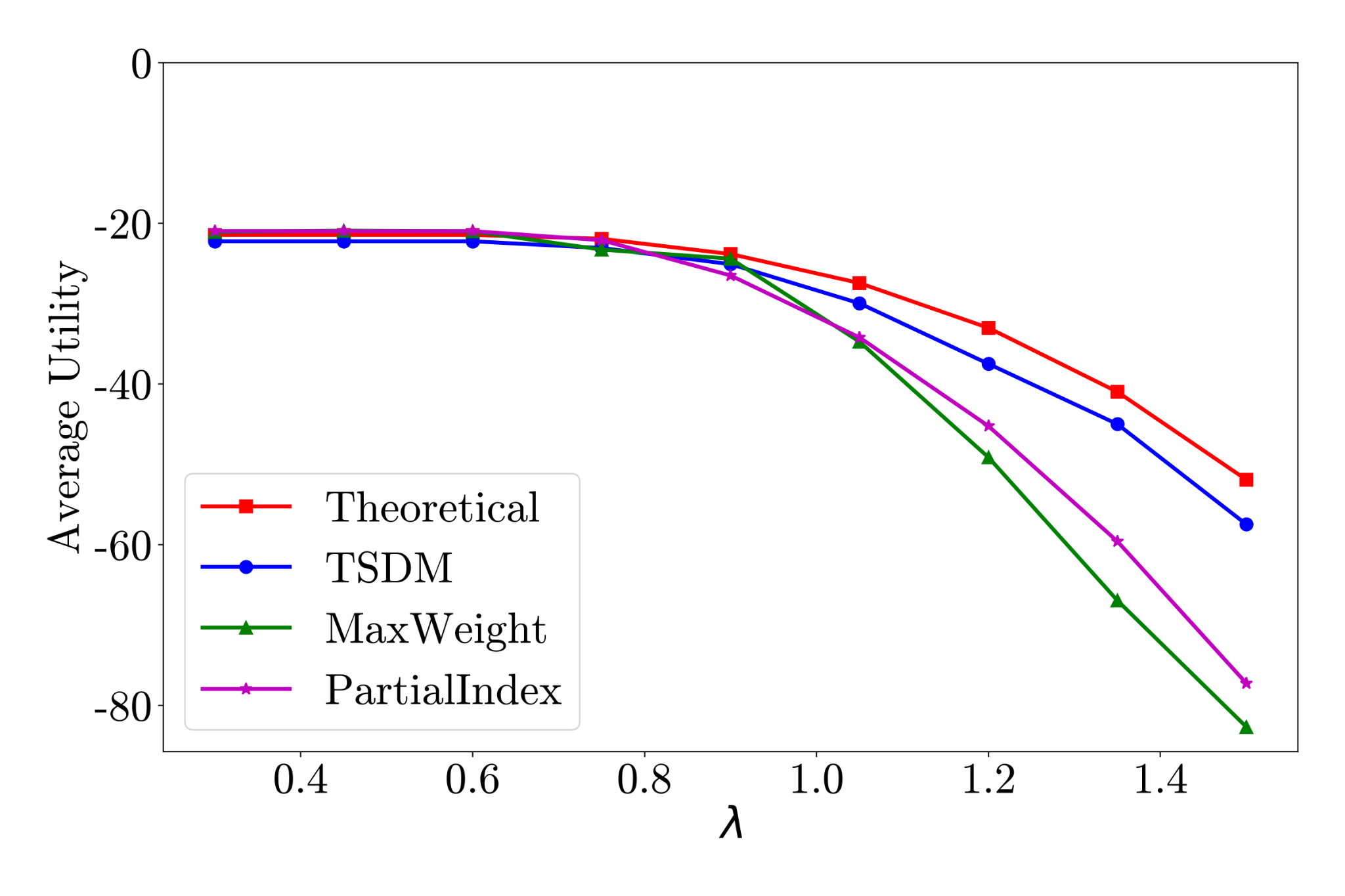}
       \label{fig:hoN50_2}
    }
    \caption{Average Utility when $\lambda$ changes in node-homogeneous systems.}
    \label{fig:stp}
\end{figure*}

We also consider a heterogeneous setting with $M=4$ channels and network sizes $N \in \{8,12,24\}$. The channel success probability matrix follows a cyclic right-shift pattern. Let the base probability vector be
$\mathbf{p}_1 = (0.90,0.70,0.50,0.20)$. Under this construction, each node has a distinct preferred channel, and $N/M$ nodes compete for each channel. The soft throughput constraint remains the same as in the previous settings. As $N$ increases from $8$ to $24$, the number of nodes competing per channel grows from $2$ to $6$, allowing us to examine how scheduling performance scales with increasing network contention under heterogeneous channel conditions. The simulation results are shown in Fig.~\ref{fig:heterogeneousshift}.
\begin{figure*}[t]
    \centering
    \subfigure[$N = 8,M = 4$.]{
       \includegraphics[width=0.3\linewidth]{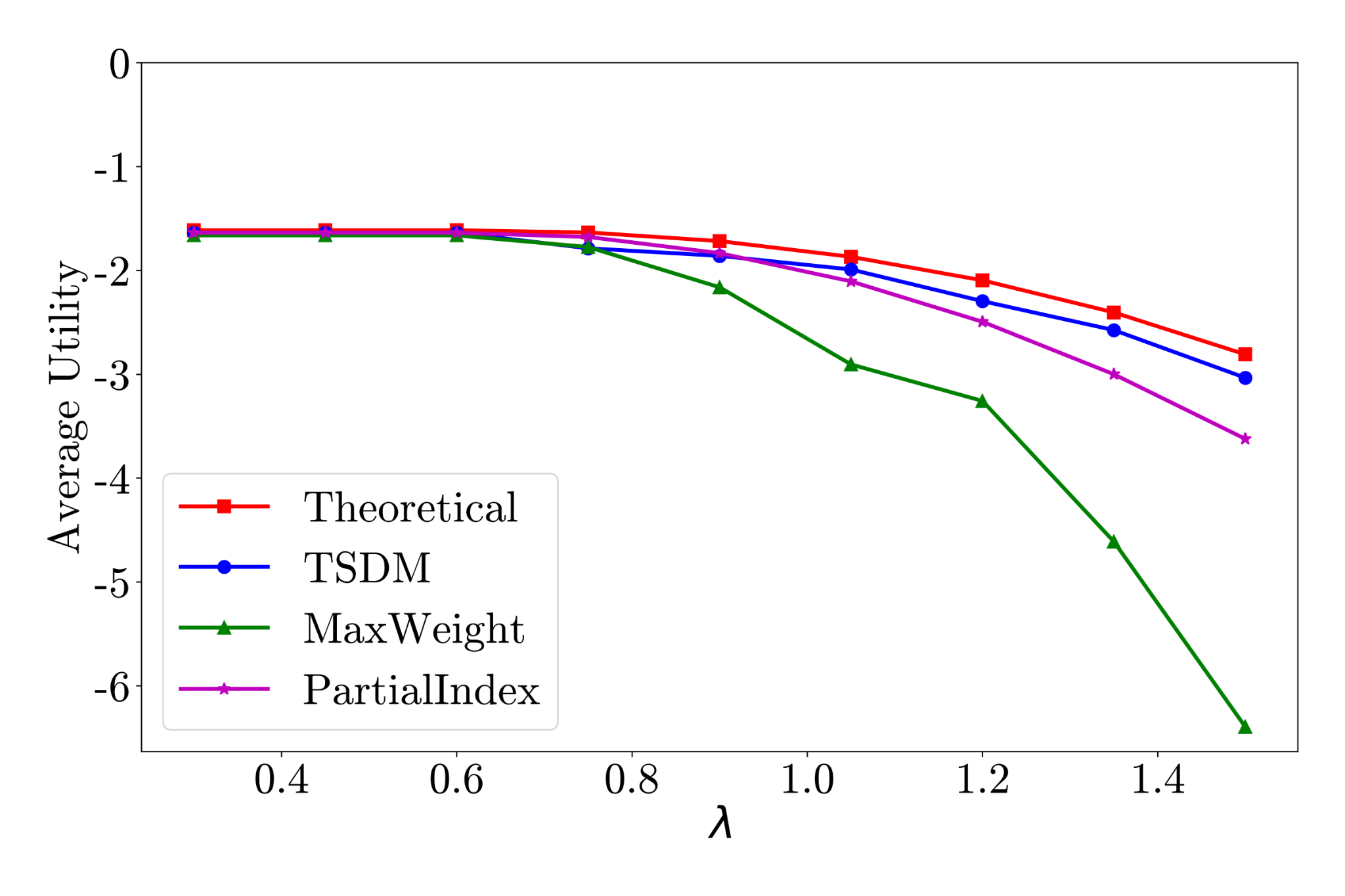}
       \label{fig:htN8_4}
    }
    \hfill
    \subfigure[ $N = 12,M = 4$.]{
       \includegraphics[width=0.3\linewidth]{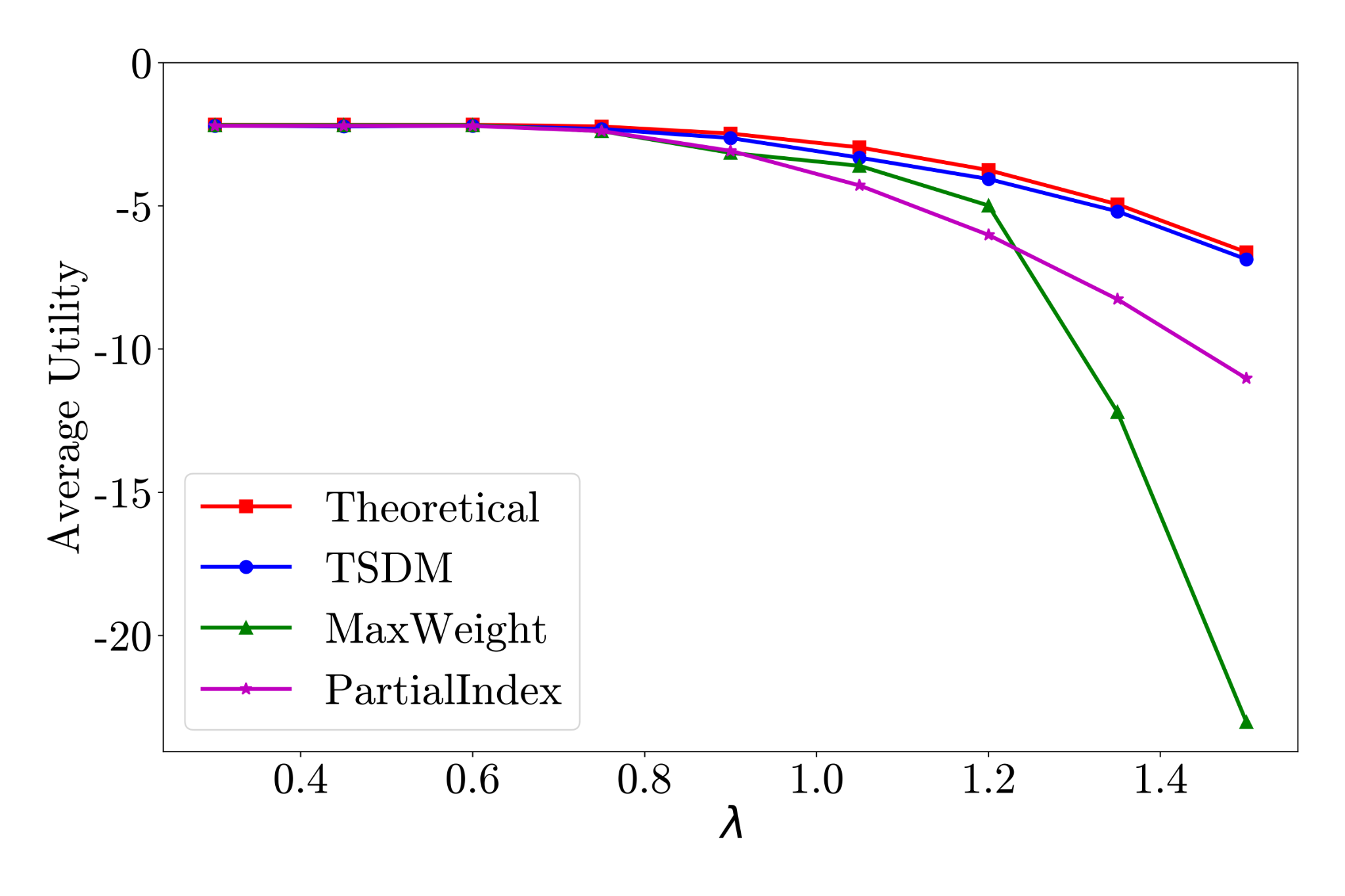}
       \label{fig:htN12_4}
    }
    \hfill
    \subfigure[$N = 24,M = 4$.]{
       \includegraphics[width=0.3\linewidth]{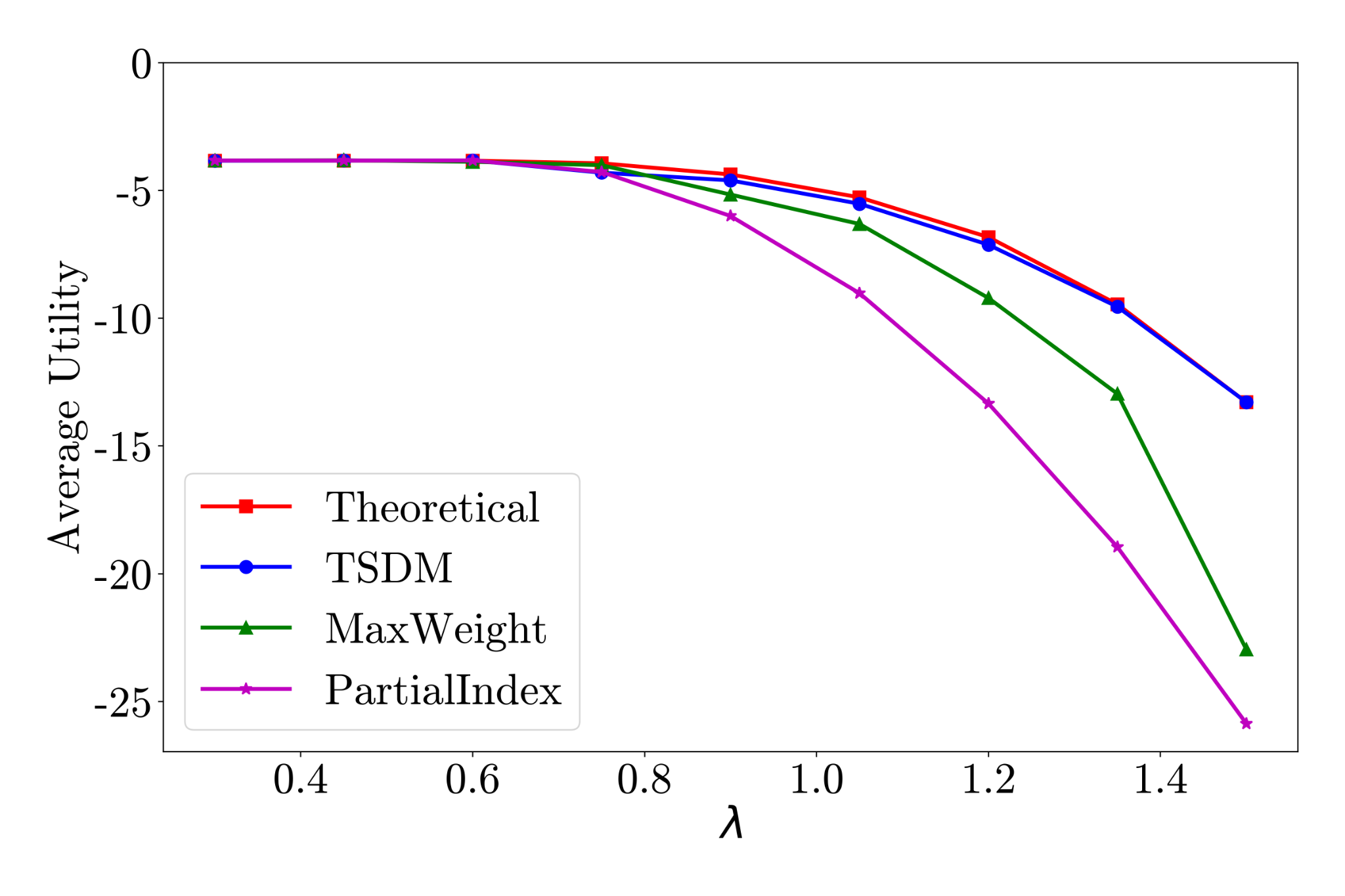}
       \label{fig:N24_4}
    }
    \caption{Average Utility when $\lambda$ changes in node-heterogeneous systems.}
    \label{fig:heterogeneousshift}
\end{figure*}

Across all configurations, TSDM consistently achieves the best performance and closely matches the theoretical upper bound. Moreover, TSDM significantly outperforms both the Max-Weight and Partial Index policies under high load. As $\lambda$ increases, satisfying the throughput requirements becomes increasingly difficult or even infeasible within the given channel capacity. The Max-Weight policy strictly penalizes throughput violations, forcing it to sacrifice AoI to meet demand, leading to poor AoI performance under high load. The Partial Index policy, on the other hand, ignores throughput requirements entirely, resulting in large throughput violations when $\lambda$ is large. In contrast, TSDM jointly optimizes both objectives and allows a controlled amount of throughput violation to achieve substantially lower AoI. 
\subsection{Weighted Proportional Fairness in both throughput and AoI}

We now evaluate the weighted proportional fairness problem described in Example~\ref{proportional_fairness}. In addition to evaluating the empirical performance of the proposed TSDM policy and its theoretical value, we also extend a classical proportional-fair(PF) scheduling policy as a baseline.
The classic PF policy schedules the node that maximizes $\frac{p_i(t)}{m_i(t)}$ in single-channel systems.
It is well known that this rule asymptotically maximizes $\sum_i \log m_i$, thereby achieving proportional fairness in throughput\cite{kushner2004convergence}.
To incorporate AoI into the baseline for comparison, we further extend it to the PF-MaxWeight policy by assigning matching weights of the form $p_{ij}(t) \left( \frac{1}{m_i(t)} - \frac{1}{a_i(t)} \right).$ This modified metric heuristically balances throughput fairness and AoI fairness and serves as a baseline for comparison. 

We first evaluate the weighted utility objective under homogeneous node settings with $M \in \{2,3,5\}$ channels. In all settings, nodes are divided into two equal groups with heterogeneous weights: the first $\lfloor N/2 \rfloor$ nodes are assigned $\alpha_i = 20, \beta_i = 1$ (throughput-oriented), while the remaining nodes are assigned $\alpha_i = 1, \beta_i = 20$ (freshness-oriented). The channel success probabilities are identical across all nodes. For $M=2$, we consider $N \in \{4,6,8,10,12,16\}$ with $p_{ij}=(0.9,0.3)$ for all $i$. For $M=3$, we consider $N \in \{6,9,12,15,18,24\}$ with $p_{ij}=(0.9,0.5,0.1)$ for all $i$. For $M=5$, we consider $N \in \{10,15,20,25,30,40\}$ with $p_{ij}=(0.9,0.7,0.5,0.3,0.1)$ for all $i$.  The simulation results are shown in Fig.~\ref{fig:weightedutility}.
\begin{figure*}[!t]
    \centering
    \subfigure[$M = 2$.]{
       \includegraphics[width=0.3\linewidth]{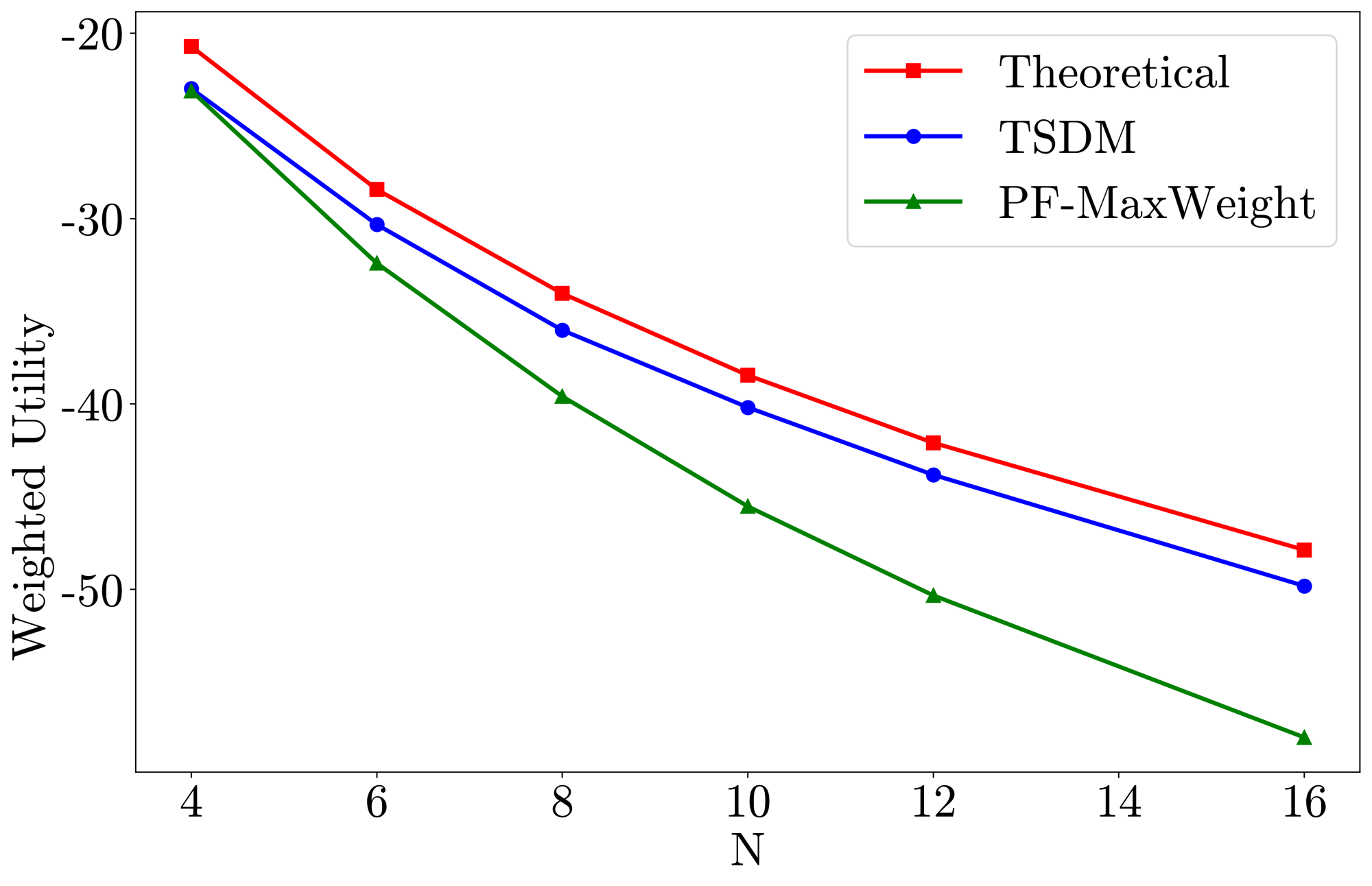}
       \label{fig:m2}
    }
    \hfill
    \subfigure[ $M = 3$.]{
       \includegraphics[width=0.3\linewidth]{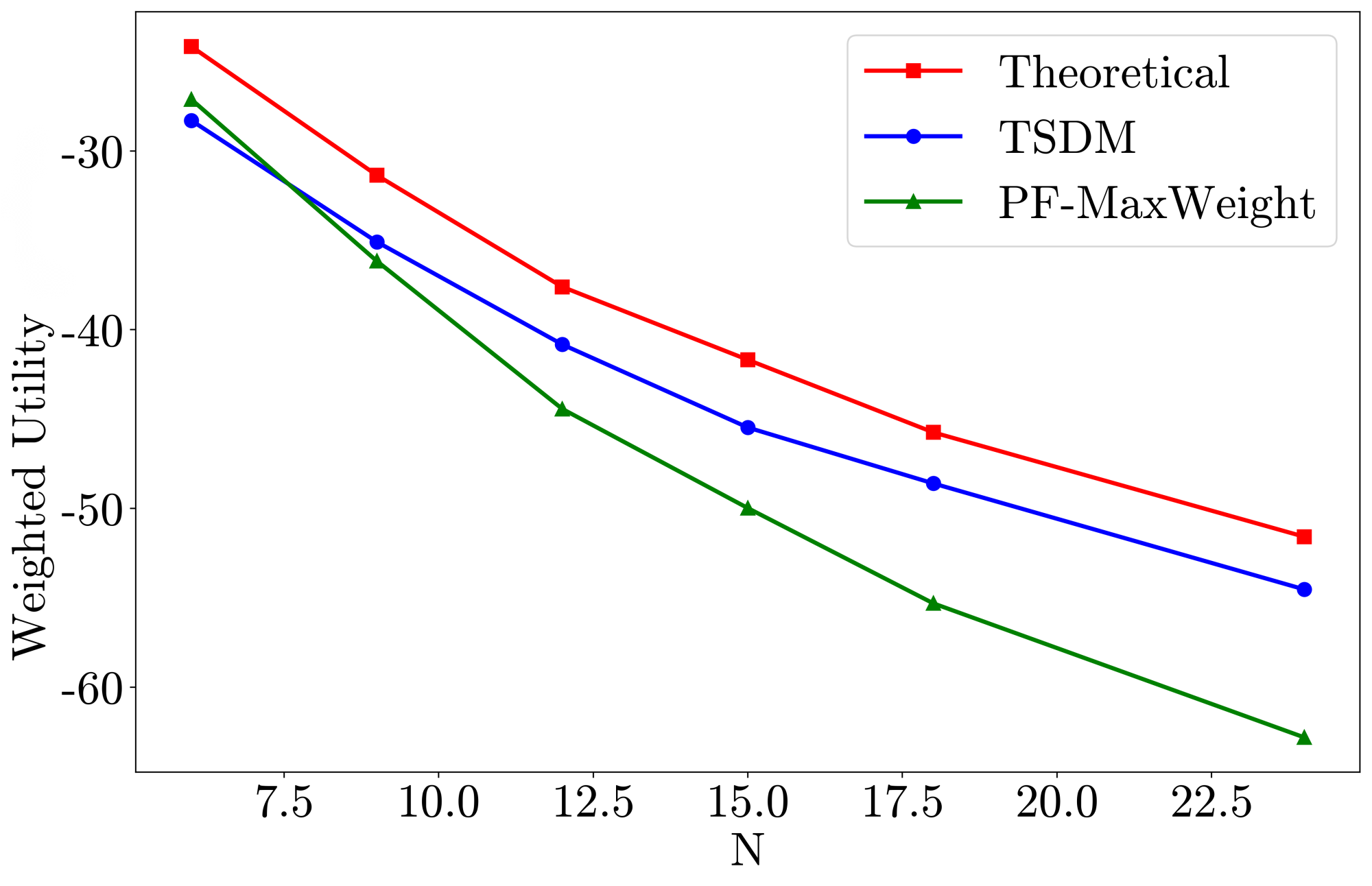}
       \label{fig:m3}
    }
    \hfill
    \subfigure[$M = 5$.]{
       \includegraphics[width=0.3\linewidth]{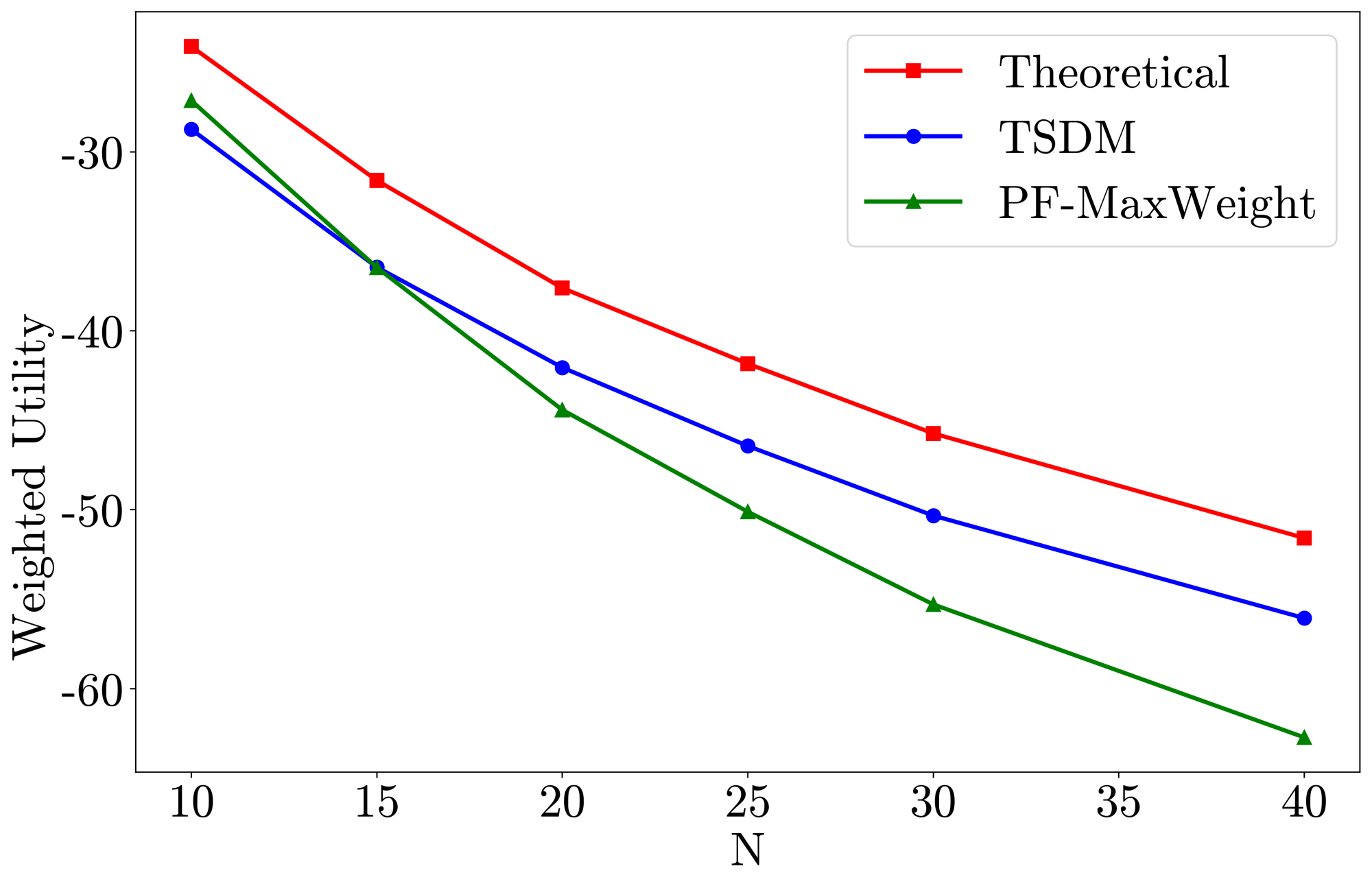}
       \label{fig:m5}
    }
    \caption{Weighted Utility when $N$ changes in node-homogeneous systems.}
    \label{fig:weightedutility}
\end{figure*}

We then evaluate the weighted utility objective under heterogeneous channel conditions using a cyclic shift pattern (as in Section~\ref{subsection:softp}) with $M\in\{3,4,5\}$ channels. The weights $(\alpha_i, \beta_i)$ are assigned based on the node's primary channel quality: nodes whose primary channel is strong are assigned a higher AoI weight ($\beta_i > \alpha_i$), while nodes with a weaker primary channel are assigned a higher throughput weight ($\alpha_i > \beta_i$), with some nodes assigned equal weights ($\alpha_i = \beta_i$). The simulation parameters for each case are as follows. For $M=3$: $N\in\{6,9,\ldots,24\}$ and $\mathbf{p}_1=(0.9,0.5,0.1)$. For $M=4$: $N\in\{8,12,\ldots,32\}$ and $\mathbf{p}_1=(0.9,0.7,0.3,0.1)$. For $M=5$: $N\in\{10,15,\ldots,40\}$ and $\mathbf{p}_1=(0.9,0.7,0.5,0.3,0.1)$. Results are shown in Fig.~\ref{fig:htweightedutility}.
\begin{figure*}[t]
    \centering
    \subfigure[$M = 3$.]{
       \includegraphics[width=0.3\linewidth]{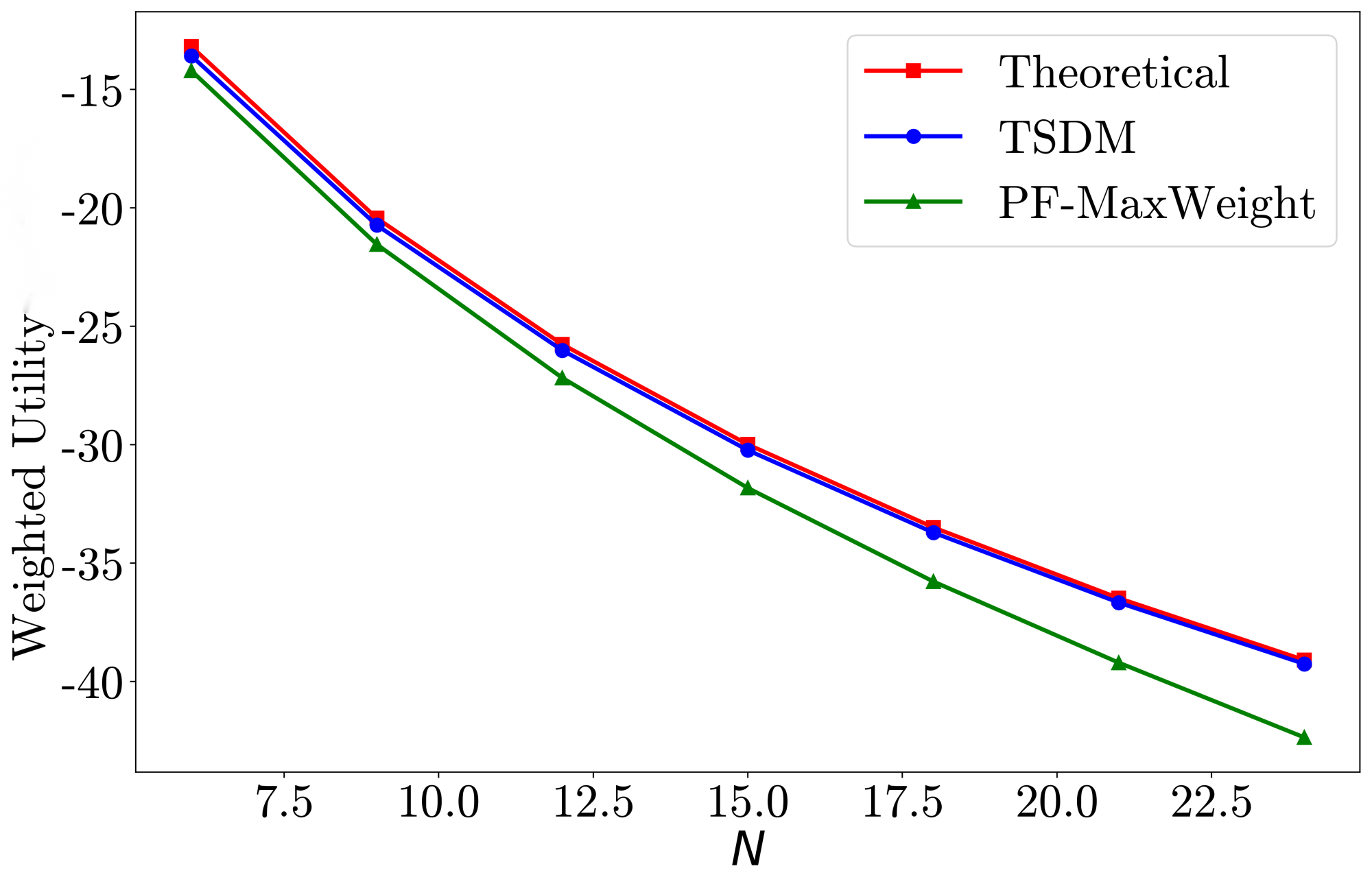}
       \label{fig:m3_ht}
    }
    \hfill
    \subfigure[ $M = 4$.]{
       \includegraphics[width=0.3\linewidth]{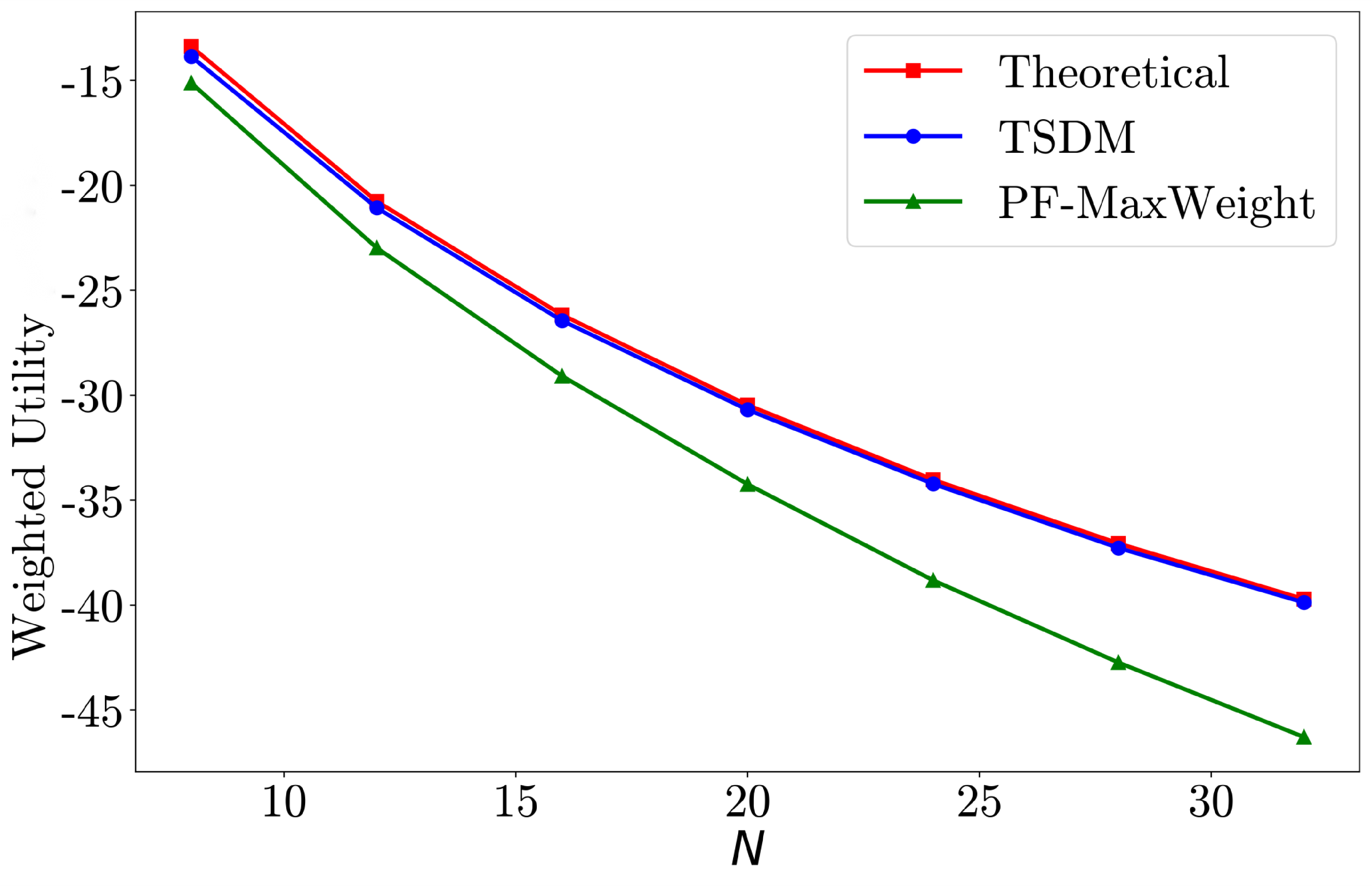}
       \label{fig:m4_ht}
    }
    \hfill
    \subfigure[$M =5$.]{
       \includegraphics[width=0.3\linewidth]{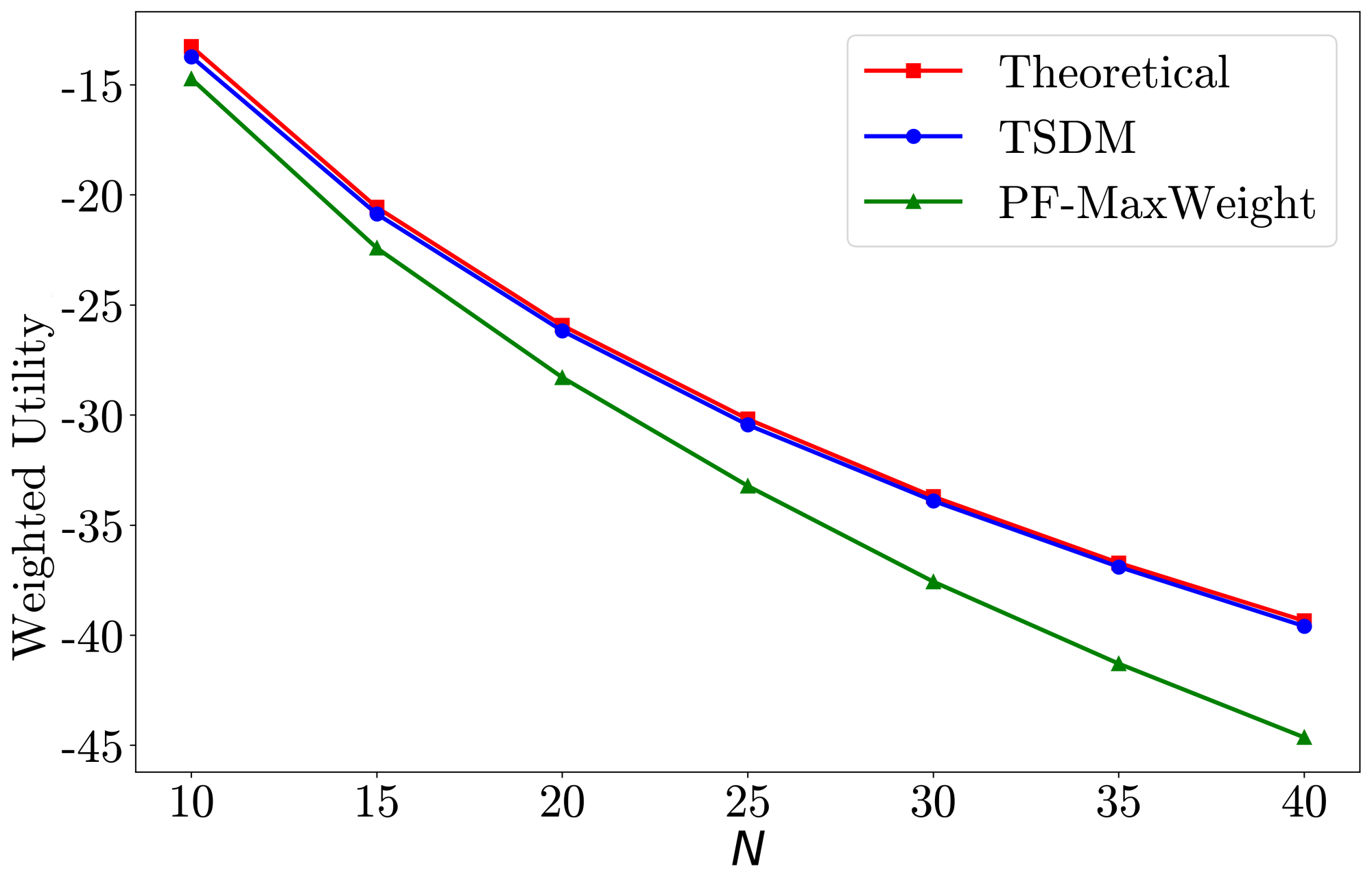}
       \label{fig:m5_ht}
    }
    \caption{Weighted Utility when $N$ changes in node-heterogeneous systems.}
    \label{fig:htweightedutility}
\end{figure*}

For both homogeneous and heterogeneous settings, TSDM achieves performance very close to the theoretical value and consistently outperforms the PF-MaxWeight baseline. 
\section{CONCLUSIONS}\label{section: conclusion}

In this paper, we studied the joint throughput-AoI optimization problem in multichannel wireless networks with heterogeneous and unreliable channels. We proposed a two-stage scheduling framework, called TSDM, to jointly optimize data freshness and throughput under practical wireless constraints. The proposed framework first maps desired throughput-AoI requirements to target transmission statistics using a second-order analytical model. Based on this mapping, the TSDM policy dynamically assigns channels to nodes through the WMD rule. We provided theoretical guarantees showing that TSDM ensures system stability and asymptotically achieves the desired throughput and AoI targets. Extensive simulations under both homogeneous and heterogeneous channel conditions demonstrated that TSDM achieves performance close to theoretical benchmarks and consistently outperforms existing scheduling baselines. Future work includes extending the proposed framework to more general network models, such as systems with time-correlated channels, unknown traffic arrivals, or distributed scheduling architectures.

\section{Acknowledgement}
This material is based upon work supported in part by NSF under Award Number CCF-2332800, the U.S. Army Contracting Command-Aberdeen Proving Ground under Grant Number W911NF-22-1-015, U.S. Army Combat Capabilities Development Command under Grant Number W911NF2520046, and the U.S. Office of Naval Research under Grant Number N000142412615.
\bibliographystyle{IEEEtran}
\bibliography{reference}
\end{document}